\documentclass{article}
\usepackage{arxiv}

\usepackage[utf8]{inputenc}
\usepackage[T1]{fontenc}
\usepackage{booktabs}
\usepackage{graphicx}
\graphicspath{{code/figures/}}
\usepackage{hyperref}
\usepackage{multirow}
\usepackage{listings}
\usepackage{natbib}
\usepackage{amsmath,amssymb,amsthm}
\usepackage[dvipsnames, table]{xcolor}
\usepackage[english]{babel}
\usepackage{tikz}
\usepackage{subcaption}
\usepackage{xcolor}

\usepackage[framemethod=tikz]{mdframed}
\usepackage{tcolorbox}
\tcbuselibrary{listings, skins, breakable, theorems}

\definecolor{bgdark}{HTML}{FFFFFF}
\definecolor{bgsurface}{HTML}{F6F8FA}
\definecolor{bgsurface2}{HTML}{EAEEF2}
\definecolor{accentblue}{HTML}{0969DA}
\definecolor{accentpurple}{HTML}{8250DF}
\definecolor{accentgreen}{HTML}{1A7F0F}
\definecolor{accentorange}{HTML}{D1521A}
\definecolor{accentred}{HTML}{CF222E}
\definecolor{textmuted}{HTML}{000000}
\definecolor{bordercolor}{HTML}{D0D7DE}
\definecolor{codemono}{HTML}{0969DA}
\definecolor{codestring}{HTML}{0A3069}
\definecolor{codekeyword}{HTML}{CF222E}
\definecolor{codecomment}{HTML}{000000}

\hypersetup{
  colorlinks=true,
  citecolor=codestring,
  linkcolor=codestring,
  urlcolor=codestring,
}

\tcbset{
  defbox/.style={
    enhanced, breakable,
    colback=bgsurface, colframe=accentblue,
    colbacktitle=accentblue!20!bgsurface,
    coltitle=accentblue,
    fonttitle=\bfseries\small,
    left=8pt, right=8pt, top=6pt, bottom=6pt,
    titlerule=0.5pt,
    arc=2pt
  },
  notebox/.style={
    enhanced, breakable,
    colback=accentpurple!8!bgsurface, colframe=accentpurple!60!black,
    left=8pt, right=8pt, top=6pt, bottom=6pt,
    arc=2pt,
    fontupper=\small\color{black}
  },
  warnbox/.style={
    enhanced,
    colback=accentorange!10!bgsurface, colframe=accentorange,
    left=8pt, right=8pt, top=4pt, bottom=4pt,
    arc=2pt,
    fontupper=\small\color{black}
  }
}

\lstdefinelanguage{pseudocode}{
  morekeywords={function, return, if, else, for, in, while, and, or, not,
                class, def, import, from, true, false, None, True, False},
  sensitive=true,
  morecomment=[l]{//},
  morecomment=[l]{\#},
  morestring=[b]",
  morestring=[b]',
}

\lstdefinestyle{darkcode}{
  backgroundcolor=\color{bgsurface},
  basicstyle=\ttfamily\footnotesize\color{black},
  keywordstyle=\color{codekeyword}\bfseries,
  stringstyle=\color{codestring},
  commentstyle=\color{codecomment}\itshape,
  numberstyle=\tiny\color{textmuted},
  breaklines=true,
  breakatwhitespace=false,
  numbers=left,
  numbersep=10pt,
  showspaces=false,
  showstringspaces=false,
  frame=none,
  xleftmargin=16pt,
  xrightmargin=8pt,
  tabsize=2,
  aboveskip=0pt,
  belowskip=0pt,
  literate={->}{{$\to$}}2 {>=}{{$\geq$}}2 {<=}{{$\leq$}}2
           {!=}{{$\neq$}}2 {**}{{${}^{**}$}}2
}

\newtcblisting{codeblock}[2][]{
  enhanced, breakable,
  listing only,
  colback=bgsurface,
  colframe=bordercolor,
  listing options={style=darkcode, language=#2},
  title={\texttt{\textcolor{textmuted}{#1}}},
  fonttitle=\small,
  colbacktitle=bgsurface2,
  coltitle=textmuted,
  arc=2pt,
  top=0pt, bottom=0pt,
  #1
}

\newtheorem{proposition}{Proposition}

\newtheorem{corollary}{Corollary}

\tcolorboxenvironment{definition}{
  enhanced,
  colback=accentblue!10!bgsurface,
  colframe=accentblue,
  boxsep=0pt,
  left=6pt,
  right=6pt,
  top=6pt,
  bottom=6pt,
  arc=2pt,
  before upper={\parindent0pt}
}

\tcolorboxenvironment{proposition}{
  enhanced,
  colback=accentblue!10!bgsurface,
  colframe=accentblue,
  boxsep=0pt,
  left=6pt,
  right=6pt,
  top=6pt,
  bottom=6pt,
  arc=2pt,
  before upper={\parindent0pt}
}

\tcolorboxenvironment{lemma}{
  enhanced,
  colback=accentpurple!10!bgsurface,
  colframe=accentpurple,
  boxsep=0pt,
  left=6pt,
  right=6pt,
  top=6pt,
  bottom=6pt,
  arc=2pt,
  before upper={\parindent0pt}
}

\tcolorboxenvironment{corollary}{
  enhanced,
  colback=accentgreen!10!bgsurface,
  colframe=accentgreen,
  boxsep=0pt,
  left=6pt,
  right=6pt,
  top=6pt,
  bottom=6pt,
  arc=2pt,
  before upper={\parindent0pt}
}

\usepackage{pgfplots}
\pgfplotsset{compat=1.18}
\usepackage{siunitx}
\usepackage{cleveref}   % must be loaded after hyperref

\definecolor{AccentBlue}{HTML}{236CFF}
\definecolor{Midnight}{HTML}{0D2644}
\definecolor{AccentOrange}{HTML}{E87722}
\definecolor{NeutralGray}{HTML}{6B7280}
\definecolor{Cnginx}{HTML}{1A73E8}   % nginx  — blue
\definecolor{Capache}{HTML}{E8710A}  % Apache — orange
\definecolor{Ccaddy}{HTML}{188038}   % Caddy  — green
\definecolor{Cpyv}{HTML}{9AA0A6}     % CPython (all versions) — gray (degenerate)
\definecolor{Cpg}{HTML}{9334E6}      % PostgreSQL — purple
\definecolor{Cmysql}{HTML}{C5221F}   % MySQL — red
\definecolor{Clam}{HTML}{E8710A}     % Serverless runtime bars

\pgfplotsset{
  panel/.style={
    width=\linewidth, height=6.8cm,
    grid=both,
    major grid style={dotted,gray!35},
    minor grid style={dotted,gray!15},
    tick label style={font=\small}, label style={font=\small},
    title style={font=\small\bfseries, color=Midnight},
    legend style={font=\footnotesize, draw=gray!40, fill=white,
                  fill opacity=0.92, inner sep=2pt, row sep=-2pt},
  },
  pool/.style={
    color=Midnight, mark=*, mark size=2.6pt,
    line width=1.4pt, const plot mark right,
  },
}

\title{$\varepsilon$-Indistinguishability in Moving Target Defense: Framework, Algorithms, and Cloud Case Studies}
\author{ 
    Sailik Sengupta \thanks{Equal Contribution}~~\thanks{The work does not relate to the author's position at Amazon.}\\
    Amazon Science\\
    \texttt{sailiks@amazon.com}
    \And
    Ankur Chowdhary$~^*$ \\
    Intuit \\
	\texttt{ankur\_chowdhary@intuit.com}
}
\date{}

\renewcommand{\shorttitle}{$\varepsilon$-Indistinguishability in Moving Target Defense}

\begin{document}
\maketitle

\begin{abstract}
Moving Target Defense (MTD) assumes its pool of candidate configurations is safe to cycle among, i.e. latency and other observables do not trivially fingerprint the active choice, but this assumption has not been quantified at the pool level.
We formalize this pool-safety problem as finding the largest $\varepsilon$-close subset of the Cartesian product of per-component implementation choices, reducing pairwise indistinguishability under an additive utility model to a densest-window query over a sum-set.
We give four algorithms spanning the scalability spectrum---full enumeration, meet-in-the-middle, FFT convolution, and Monte Carlo sampling---covering configuration spaces from tens to $10^{38}$.
We then measure the anonymity gap end-to-end on two production cloud case studies, and find that a component's latency differences do not survive deployment unchanged: a four-runtime serverless rotation, where nothing else masks the interpreter, collapses from four-way to three-way anonymity against a VPC-adjacent adversary, while a $27$-configuration three-tier stack, where the same interpreter differences are instead absorbed by a shared $8$~ms database round-trip, delivers nine-way effective anonymity.
The framework and the two case studies together suggest a diagnostic for MTD design: rotating a component adds anonymity only if the latency differences among its variants are too small for the adversary to identify.
\end{abstract}

\section{Introduction}

Moving Target Defense (MTD) rotates a system among diverse configurations to invalidate an attacker's reconnaissance \citep{jajodia2011moving,sengupta2020survey}. Research on MTD has matured along three axes---which configurations to deploy, how to switch, and when to trigger transitions---but consistently assumes that the candidate pool itself is safe to cycle among. When configurations differ in observable behavior this assumption fails: a passive adversary who measures response latency can identify the active choice without exploiting the service, reading memory, or breaking authentication \citep{okhravi2016moving,seibert2014information}. Unfortunately approaches like random jitter does not close the gap, because a patient adversary averages over $n$ observations to reduce noise by a factor of $\sqrt{n}$ and eventually overcomes any fixed noise budget \citep{crosby2009opportunities}. Even calibrating that noise in the style of differential privacy \citep{dwork2006differential} does not fix the problem: the guarantee rests on a finite privacy budget, each probe the adversary issues consumes a portion of it, and an adversary free to keep probing eventually exhausts any budget the defender sets.

We test the pool-safety assumption on two cloud rotations and find it violated in both. In one, a four-runtime serverless rotation collapses from four-way to three-way anonymity against a VPC-adjacent adversary, because one Python interpreter version is $0.378$~ms slower than the other three. In the other, a three-tier stack of one web server, one Python runtime, and one database, each drawn from three candidates, presents $27$ configurations in total. Yet, it delivers only nine-way effective anonymity against an adversary an order of magnitude blunter than the VPC jitter floor, and three-way at the floor itself, because the database axis's client-protocol latency gaps are resolved outright while the three interpreter versions---their differences masked behind a shared $8$~ms database round-trip---are the only levels the adversary cannot separate. Both collapses point to a common lesson: nominal configuration count is not, by itself, a security metric.

To characterize this gap in general and enforce it at deployment time, we formulate the pool-safety problem as follows. A system is a $k$-component tuple $c = (s_1, \ldots, s_k) \in S_1 \times \cdots \times S_k$, and we model its observable latency as a weighted additive sum of per-component contributions:
\begin{equation}\label{eq:utility}
  U(c) = \sum_{i=1}^{k} w_i \cdot u(s_i).
\end{equation}
We then seek the largest subset $C \subseteq S_1 \times \cdots \times S_k$ satisfying $\max_{c \in C} U(c) - \min_{c \in C} U(c) \leq \varepsilon$; we call such a subset \emph{$\varepsilon$-close}, and it is the pool of configurations that an adversary of precision $\varepsilon$ cannot distinguish. Under the additive assumption the search reduces to finding the densest $\varepsilon$-width interval over the multiset $\{U(c) : c \in \prod_i S_i\}$. Although latency is our running example, any additive quantitative channel, such as CPU footprint, memory pressure, response size, energy, can fit the same formulation (\S\ref{sec:multichannel}).

We note that MTD candidate pools grow multiplicatively across rotation axes. Hence, a defense-in-depth deployment that diversifies at every layer of the request path---runtime versions, database backends, interface protocols, TLS libraries, container base images, request middleware, and connection-pool sizing---readily produces a dozen rotation axes, and four variants per axis already puts $\prod_i n_i$ at $4^{12} \approx 10^{7}$; per-service diversification in a multi-service architecture pushes it well past $10^{15}$. No single algorithm can handle this range. Thus, we develop four algorithms that together span the scalability spectrum: exact enumeration on tractable product spaces; meet-in-the-middle to extend the exact regime by a square-root factor; FFT convolution to reach product sizes the exact methods cannot address; and Monte Carlo sampling for the remainder, with a uniform-convergence density bound that remains valid even when the target window is selected from the samples.

Together, the framework and the two cloud case studies yield a diagnostic criterion for MTD design: a rotation axis converts its nominal multiplicity into anonymity only where its per-level latency gaps fall below the adversary's precision, and whether they do is a property of the deployment rather than of the axis. The same interpreter axis that leaks its odd member in the serverless rotation, where nothing masks it, supplies the only fine-precision anonymity in the three-tier stack, where a shared I/O floor hides it. The rest of the paper develops this program in three parts. \S\ref{sec:approach} formalizes the problem and presents the four algorithms. \S\ref{sec:experiments} measures the anonymity gap on cloud infrastructure and evaluates the algorithms across scales. \S\ref{sec:relatedwork} places the work in the context of prior MTD and side-channel literature.

\section{Approach}\label{sec:approach}

\subsection{Problem reduction}

The pairwise indistinguishability constraint reduces to a single range condition on $C$:

\begin{proposition}\label{prop:reduction}
The constraint $|U(c_1) - U(c_2)| \leq \varepsilon$ for all $c_1, c_2 \in C$ holds if and only if $\max_{c \in C} U(c) - \min_{c \in C} U(c) \leq \varepsilon$.
\end{proposition}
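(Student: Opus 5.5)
The plan is to prove the two directions separately. Throughout, $C$ is a subset of the finite product $S_1 \times \cdots \times S_k$, so it is finite. I assume $C$ is nonempty; if $C$ is empty the pairwise condition holds vacuously and the range condition is taken to hold by convention. Since $U$ is real-valued on a finite nonempty set, both $\max_{c \in C} U(c)$ and $\min_{c \in C} U(c)$ exist and are attained. Fix $c^{+} \in C$ with $U(c^{+}) = \max_{c\in C} U(c)$ and $c^{-} \in C$ with $U(c^{-}) = \min_{c \in C} U(c)$.

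\emph{Forward direction.} Suppose $|U(c_1) - U(c_2)| \le \varepsilon$ for all $c_1, c_2 \in C$. Instantiate this with $c_1 = c^{+}$ and $c_2 = c^{-}$. Since $U(c^{+}) \ge U(c^{-})$, the absolute value is just the difference, which gives
\[
\max_{c \in C} U(c) - \min_{c \in C} U(c) = U(c^{+}) - U(c^{-}) \le \varepsilon .
\]

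\emph{Reverse direction.} Suppose the range is at most $\varepsilon$, and take arbitrary $c_1, c_2 \in C$. By symmetry of $|\cdot|$, assume without loss of generality that $U(c_1) \ge U(c_2)$. By the definitions of max and min, $U(c_1) \le U(c^{+})$ and $U(c_2) \ge U(c^{-})$. Therefore
\[
|U(c_1) - U(c_2)| = U(c_1) - U(c_2) \le U(c^{+}) - U(c^{-}) \le \varepsilon .
\]

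There is no substantive obstacle here. The only point needing care is that the extrema exist and are attained, which is what lets the forward direction apply the hypothesis to actual elements of $C$. Finiteness of the configuration space guarantees this. For an infinite $C$, one would instead phrase the condition with $\sup$ and $\inf$ and use approximating sequences.
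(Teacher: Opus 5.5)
Your proof is correct and follows the paper's argument essentially step for step: for the forward direction you apply the pairwise condition to the maximizer and minimizer, and for the reverse direction you bound both utilities between the minimum and maximum. Your remarks that the extrema are attained because $C$ is finite, and on the empty-set convention, are extra care that the paper leaves implicit.
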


By \autoref{prop:reduction}, any maximal $\varepsilon$-close subset is the preimage under $U$ of some interval $[\ell, \ell + \varepsilon]$ on the real line, and the largest such subset is realized by the densest interval over the multiset $\{U(c) : c \in S_1 \times \cdots \times S_k\}$. The additive decomposition~\eqref{eq:utility} exposes this multiset as a $k$-fold sum-set of per-component contributions $v_i = w_i \cdot u(s_i)$. In this section, we will describe algorithms below that exploit this sum-set structure.

\subsection{Per-axis budget and visibility regimes}
\label{sec:regimes}

The additive form of $U$ decomposes not only across configurations but across the axes that select them. When the pool is a product $C = C_1 \times \cdots \times C_k$ of per-axis choices, the following corollary of \autoref{prop:reduction} shows that its total latency spread is exactly the sum of each axis's own spread, turning $\varepsilon$ into a budget the $k$ axes draw from jointly:

\begin{corollary}[Per-Axis Budget]\label{cor:budget}
Let $C = C_1 \times \cdots \times C_k$ with $C_i \subseteq S_i$, and let $\sigma_i(C_i) = \max_{s \in C_i} w_i\, u(s) - \min_{s \in C_i} w_i\, u(s)$ denote the latency spread of the selected levels on axis $i$. Then $C$ is $\varepsilon$-close if and only if $\sum_{i=1}^{k} \sigma_i(C_i) \leq \varepsilon$.
\end{corollary}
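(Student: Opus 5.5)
By \autoref{prop:reduction}, $C$ is $\varepsilon$-close exactly when $\max_{c \in C} U(c) - \min_{c \in C} U(c) \leq \varepsilon$. The proof therefore reduces to one identity: for a product pool the range of $U$ splits into the per-axis spreads,
\[
\max_{c \in C} U(c) - \min_{c \in C} U(c) = \sum_{i=1}^{k} \sigma_i(C_i).
\]
Once this holds, both directions of the corollary follow immediately by substituting it into \autoref{prop:reduction}.

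To prove the identity, I would compute the maximum and the minimum separately. Because $U(c) = \sum_i w_i u(s_i)$ and each coordinate $s_i$ ranges independently over $C_i$, the maximization over the product separates into per-axis maximizations:
\[
\max_{c \in C} U(c) = \sum_{i=1}^{k} \max_{s \in C_i} w_i u(s).
\]
For the inequality $\leq$, I would bound each summand of $U(c)$ termwise by its own axis maximum. For $\geq$, I would take $c^\ast = (s_1^\ast, \ldots, s_k^\ast)$, where each $s_i^\ast$ attains the maximum on axis $i$. This tuple lies in $C$ precisely because $C$ is a full Cartesian product, and this is the only place the product hypothesis is used. The same argument, with minimizers, gives the minimum. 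Subtracting the two sums term by term then yields $\sum_i \sigma_i(C_i)$.

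The steps above quietly assume that each $C_i$ is nonempty and finite, so that the extrema exist and are attained. I would make this assumption explicit. If some $C_i$ is empty, then $C$ is empty and the closeness condition holds vacuously, so I would either exclude that case or state it separately. Since the $S_i$ are finite implementation sets, attainment is automatic. The sign of $w_i$ does not matter, because $\sigma_i$ is defined directly in terms of the weighted values $w_i u(s)$.

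I do not expect a real obstacle here. The one step that needs care is the converse direction, where realizing the joint maximizer $c^\ast$ inside $C$ depends on the product structure. This is what fails for general, non-product $\varepsilon$-close sets, so the proof should state clearly where the product hypothesis is used.
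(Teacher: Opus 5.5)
Your proposal is correct and matches the paper's proof essentially step for step. The paper also takes the maximum and minimum of $U$ coordinate-wise over the product (each coordinate chosen independently in $C_i$), subtracts to obtain $\sum_i \sigma_i(C_i)$, and applies \autoref{prop:reduction}; your explicit two-sided argument and your remark that each $C_i$ must be nonempty only make precise what the paper leaves implicit.
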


As discussed in the introduction, the two case studies illustrate what the budget view predicts. The serverless rotation's $0.378$~ms gap separates one Python interpreter version from the other three (\S\ref{sec:runtime}): the axis exceeds its budget, so it splits, and three levels survive where four were nominally available. The three-tier stack's Python axis measures a spread of zero (\S\ref{sec:fullstack}), because a shared $8$~ms database round-trip absorbs any interpreter-level difference: the same interpreter axis that split in the serverless rotation spends nothing here, so all three of its levels survive for free. Both cases point to the same lesson: an axis's regime is a property of where it is measured, not of what it is.

\autoref{cor:budget} makes this precise by sorting an axis into one of three regimes, indexed by how much of the full axis $S_i$---with spread $\sigma_i \equiv \sigma_i(S_i)$---an adversary of precision $\varepsilon$ leaves standing. If $\sigma_i \leq \varepsilon$, the axis is \emph{invisible}: every level fits inside a single $\varepsilon$-window, so its full multiplicity survives at a budget cost of only $\sigma_i$, and a degenerate axis ($\sigma_i = 0$, as the interpreter axis is in the three-tier stack) survives for free. If $\sigma_i > \varepsilon$ but some levels sit closer to each other than to the rest, the axis is \emph{partially visible}: only the densest cluster of levels---spanning at most the remaining budget---survives, and the rest are exposed, exactly as three of the serverless rotation's four interpreters survive while the fourth is isolated. If every pair of levels on the axis is separated by more than $\varepsilon$, the axis is \emph{fully visible}: no two of its levels can coexist in any $\varepsilon$-close product pool, the regime the database axis of \S\ref{sec:fullstack} occupies at any precision tighter than its narrowest adjacent gap. For $m$ evenly spaced levels, full visibility begins once the spread exceeds $(m-1)\,\varepsilon$, since the smallest adjacent gap is the binding constraint.

These regimes classify each axis on its own, which is what makes them a useful design diagnostic: an engineer can read per-axis spreads off a measurement like \Cref{fig:fullstack}(a) and predict how much of an axis's diversity survives without running any algorithm. They are not a complete characterization of the optimal pool, however, because a product-set assumption underlies them: the densest $\varepsilon$-window found directly over the sum-set can admit configurations whose per-axis differences cancel across axes, so the true optimum need not decompose as $C_1 \times \cdots \times C_k$ at all. The four algorithms of the next subsection search the sum-set itself and so capture the cross-axis cancellation that per-axis reasoning misses.

\subsection{Four algorithms across the scalability spectrum}

MTD deployments span many orders of magnitude in configuration-space size, and no single algorithm handles the full range. We therefore present four (\autoref{tab:complexity}), each dominating a distinct regime; step-by-step derivations, pseudocode, and error analysis appear in Appendix~\ref{sec:algorithms-app}.

\begin{table}[h!]
\centering
\caption{Complexity of the four algorithms. $N = \prod_i n_i$ is the product space size, $R$ is the discretized utility range, and $M$ is the sample count.}
\label{tab:complexity}
\vspace{0.5em}
\begin{tabular}{@{}llll@{}}
\toprule
Algorithm & Time & Space & Exact \\
\midrule
Full Enumeration & $O(N \log N)$ & $O(N)$ & Yes \\
Meet in the Middle & $O(N \log N)$ & $O(\sqrt{N})$ enum $+$ $O(N)$ merge & Yes \\
FFT Convolution & $O(k \cdot R \log R)$ & $O(R)$ & Approx. \\
Sampling & $O(M \log M)$ & $O(M)$ & Approx. \\
\bottomrule
\end{tabular}
\end{table}

\paragraph{Full enumeration.}
For product spaces of tractable size (say, $N \leq 10^{7}$), the utility multiset $\{U(c)\}$ can be materialized directly. We build it iteratively as a running sum-set over the per-component contributions, sort the resulting array, and apply a two-pointer sliding window of width $\varepsilon$ to identify the densest interval in a single linear pass. The algorithm returns the exact optimum in $O(N \log N)$ time, dominated by the sort.

\paragraph{Meet-in-the-middle.}
When $N$ exceeds direct enumeration but $k$ remains moderate, we extend the exact regime by partitioning the $k$ sets into two halves and enumerating each half's sum-set independently at $O(\sqrt{N})$ cost. Forming all pairwise sums between the halves reconstructs the full utility multiset, on which the same sliding window returns the exact optimum. The gain is in memory: neither half's enumeration ever exceeds $O(\sqrt{N})$, so the reachable regime grows by a square-root factor before the merge \citep{horowitz1974computing}.

\paragraph{FFT convolution.}
Once $N$ exceeds any exact method, we shift to a distributional representation of the same object. Per-component utilities are discretized into $R$ bins of width $\delta$, and the $k$ resulting histograms are combined by iterated FFT convolution to yield the histogram of $\{U(c)\}$ over the Cartesian product; a prefix-sum sliding window over that histogram identifies the densest $\varepsilon$-interval \citep{cooley1965algorithm,cormen2009introduction}. Because each per-component contribution is rounded by at most $\delta/2$, the total utility of any configuration shifts by at most $k\delta/2$, and the resulting relative count error is $O(k\delta / \varepsilon)$. The error can be driven down by shrinking $\delta$ at proportional cost in bins, while the runtime $O(k \cdot R \log R)$ depends on the utility range rather than $N$ and remains tractable well beyond the exact regime.

\paragraph{Monte Carlo sampling.}
When the discretized range $R$ itself becomes prohibitive---typically once $k$ passes roughly $15$ with wide per-component ranges---we retreat to random sampling. We draw $M$ configurations i.i.d.\ uniformly from the Cartesian product, compute and sort their utilities, and apply the sliding window to the resulting array; scaling the winning window's sample count by $N/M$ estimates the true count. The runtime $O(M \log M)$ is independent of both $N$ and $R$, but the approximate output requires a probabilistic guarantee, which we derive next.

\subsection{A uniform guarantee for Monte Carlo sampling}

Reporting the winning window's empirical density as an estimate of its true density is optimistic: this is a selection-bias problem. The window $\hat\ell$ is not chosen in advance; it is selected from the same $M$ samples used to estimate its density, as whichever candidate window happens to look densest. A window selected for looking densest is more likely than a fixed window to have benefited from a favorable sampling fluctuation, so its apparent density systematically overstates the true density. A pointwise Hoeffding bound does not fix this: it only guarantees accuracy for a window chosen before the data are drawn, and $\hat\ell$ is chosen after. We therefore need a bound that holds uniformly over every candidate window at once, so that whichever window ends up selected---regardless of how the selection rule picks it---is guaranteed accurate. The Dvoretzky--Kiefer--Wolfowitz (DKW) inequality \citep{dvoretzky1956asymptotic}, in the tight form due to \citet{massart1990tight}, supplies exactly this:

\begin{proposition}[Uniform Window-Density Bound]\label{prop:sampling}
For each $\ell \in \mathbb{R}$, let $p(\ell) = \Pr_{c}[U(c) \in [\ell, \ell+\varepsilon]]$ over a uniformly random configuration, and let $\hat p_M(\ell)$ be the empirical estimate from $M$ i.i.d.\ samples.
By the Dvoretzky--Kiefer--Wolfowitz inequality, with probability at least $1-\alpha$,
\[
  \sup_{\ell \in \mathbb{R}}\,\bigl|\hat p_M(\ell) - p(\ell)\bigr| \;\leq\; 2t,
  \qquad t = \sqrt{\tfrac{1}{2M}\log\tfrac{2}{\alpha}}.
\]
In particular, $M = O\!\left(\varepsilon_{\text{stat}}^{-2}\log(1/\alpha)\right)$ samples suffice for additive accuracy $\varepsilon_{\text{stat}}$ \emph{simultaneously} over all candidate windows---including any window selected post-hoc from the samples.
\end{proposition}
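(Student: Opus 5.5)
The plan is to reduce window densities to differences of the distribution function of the scalar random variable $X = U(c)$, with $c$ uniform over $S_1 \times \cdots \times S_k$, and then invoke DKW--Massart once for all windows. Let $F(x) = \Pr[X \le x]$ and let $\hat F_M$ be the empirical CDF of the sampled utilities $X_1,\dots,X_M$. Because the windows are closed and $X$ is discrete (finitely many configurations, so atoms are possible), write
\[
  p(\ell) = F(\ell+\varepsilon) - F(\ell^-), \qquad \hat p_M(\ell) = \hat F_M(\ell+\varepsilon) - \hat F_M(\ell^-),
\]
where $G(x^-) = \lim_{y \uparrow x} G(y)$. The triangle inequality then gives, for every $\ell$,
\[
  |\hat p_M(\ell) - p(\ell)| \le |\hat F_M(\ell+\varepsilon) - F(\ell+\varepsilon)| + |\hat F_M(\ell^-) - F(\ell^-)| \le 2 \sup_x |\hat F_M(x) - F(x)|.
\]
The left-limit term is controlled by the same supremum, since it is a limit of values $|\hat F_M(y) - F(y)|$ with $y \uparrow \ell$.

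The key step is the DKW inequality in Massart's form: $\Pr[\sup_x |\hat F_M(x) - F(x)| > t] \le 2e^{-2Mt^2}$. Massart's constant holds for arbitrary distributions, including ones with atoms. Setting the right side equal to $\alpha$ gives $t = \sqrt{\tfrac{1}{2M}\log\tfrac{2}{\alpha}}$. Combined with the previous display, this yields $\sup_\ell |\hat p_M(\ell) - p(\ell)| \le 2t$ with probability at least $1-\alpha$.

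This event is defined before any selection is made and holds for every $\ell$ at once. It therefore covers any data-dependent $\hat\ell$, whatever rule produced it, which is the post-hoc claim.

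For the sample complexity, require $2t \le \varepsilon_{\text{stat}}$. Solving gives $M \ge \tfrac{2}{\varepsilon_{\text{stat}}^2}\log\tfrac{2}{\alpha} = O(\varepsilon_{\text{stat}}^{-2}\log(1/\alpha))$.

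The main obstacle is the treatment of atoms at the left endpoint. The closed window needs $F(\ell^-)$ rather than $F(\ell)$, and one must check that the uniform bound extends to left limits; it does, by passing to the limit as above. There are two fallbacks if a proof for discrete laws is wanted without citing Massart's general statement. One is to apply DKW to $-X$, whose CDF evaluated at $-\ell$ gives $\Pr[X \ge \ell]$. The other is to note that $X$ has finite support, so the supremum is attained on finitely many points.
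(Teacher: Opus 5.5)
Your proposal is correct and matches the paper's proof step for step. Both write the closed-window densities as $F(\ell+\varepsilon)-F(\ell^-)$ and its empirical analogue, bound the left-limit term by the same supremum, apply DKW--Massart once, and use the triangle inequality to get $2t$ and $M \ge \tfrac{2}{\varepsilon_{\text{stat}}^2}\log\tfrac{2}{\alpha}$. Your closing fallbacks are unnecessary and would not replace the general-$F$ form of Massart's bound: $-X$ is just as discrete, and a union bound over the finite support adds a dependence on the support size. The main argument does not rely on them.
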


Applying this uniform bound at both $\hat\ell$ and $\ell^*$ and chaining the two inequalities via the definition of $\hat\ell$ as the empirical argmax converts the sampling error into a bound on the gap between the true density $p(\hat\ell)$ and the optimum $p^*$:

\begin{corollary}[Selection-Bias-Aware Near-Optimality]\label{cor:selection}
Let $\hat\ell = \arg\max_\ell \hat p_M(\ell)$ be the algorithm's empirical choice, $\ell^* = \arg\max_\ell p(\ell)$ the true optimum, and $p^* = p(\ell^*)$.
Under the conditions of \autoref{prop:sampling}, with probability at least $1-\alpha$,
\[
  p^* - p(\hat\ell) \;\leq\; 4t \;=\; 4\sqrt{\tfrac{1}{2M}\log\tfrac{2}{\alpha}}.
\]
That is, the \emph{true} density of the empirically-selected window is within $4t$ of the optimum, even though the same samples are used both to choose the window and to estimate its density.
\end{corollary}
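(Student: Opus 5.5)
The plan is the standard chaining argument, run on the high-probability event $\mathcal{E}$ supplied by \autoref{prop:sampling}. On $\mathcal{E}$ we have $\sup_{\ell}|\hat p_M(\ell) - p(\ell)| \leq 2t$. Everything below is deterministic on $\mathcal{E}$, so the final probability is exactly $1-\alpha$ and no union bound is needed. This is the point of using a uniform bound: $\hat\ell$ is a random, data-dependent index, but the uniform bound applies to every $\ell$ simultaneously, including $\hat\ell$.

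The chain has three links:
\[
p^* = p(\ell^*) \leq \hat p_M(\ell^*) + 2t \leq \hat p_M(\hat\ell) + 2t \leq p(\hat\ell) + 4t .
\]
The first and third inequalities are the uniform bound evaluated at $\ell^*$ and at $\hat\ell$. The middle inequality uses the definition of $\hat\ell$ as the empirical maximizer. Rearranging gives $p^* - p(\hat\ell) \leq 4t$. Substituting $t = \sqrt{\tfrac{1}{2M}\log\tfrac{2}{\alpha}}$ gives the displayed form.

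The step I expect to need the most care is well-definedness of the two argmaxes.
\begin{itemize}
\item For $\hat\ell$: $\hat p_M$ is a step function taking finitely many values, so it attains its maximum. One can take $\hat\ell$ to be a left endpoint aligned with a sample point, which is what the sliding window returns.
\item For $\ell^*$: $p$ is a maximum over finitely many configurations, since $U$ takes finitely many values and the window is closed. The window count $\ell \mapsto |U^{-1}([\ell,\ell+\varepsilon])|$ is therefore piecewise constant, and its maximum is attained at some $\ell$ equal to a value of $U$.
\end{itemize}
If one prefers not to rely on attainment, the same chain can be run with $p^*$ replaced by $\sup_\ell p(\ell)$ and $\ell^*$ taken $\eta$-close to optimal, then letting $\eta \to 0$.

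A secondary check concerns the order of quantifiers. \autoref{prop:sampling} is already stated as a bound on the supremum over $\ell$, so conditioning on $\mathcal{E}$ legitimately covers the data-dependent $\hat\ell$. This is the only place where the selection-bias concern enters, and the uniformity resolves it.
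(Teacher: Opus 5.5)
Your proposal is correct and follows the paper's proof essentially step for step. Like the paper, you condition on the uniform event of \autoref{prop:sampling}, chain the two $2t$ deviations through the empirical-argmax inequality $\hat p_M(\hat\ell) \geq \hat p_M(\ell^*)$, and justify that both argmaxes are attained using the finite range of $U$ and sample-anchored windows. One small wording fix: the event holds with probability \emph{at least} $1-\alpha$, not exactly $1-\alpha$, so the conclusion inherits ``at least'' as well.
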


The corollary's $4t$ bound justifies the use of Monte Carlo sampling in the empirical evaluation of \S\ref{sec:experiments}: with modest $M$, the true density of the empirically-selected window is within an $M^{-1/2}$ margin of the optimum. Proofs of the propositions and corollary appear in Appendix~\ref{sec:proofs}. Appendix~\ref{sec:selbias-app} demonstrates that the $4t$ constant is conservative by one to two orders of magnitude in practice.

\section{Experiments}\label{sec:experiments}

The evaluation proceeds in three passes. Two cloud case studies (\S\ref{sec:cloud-setup}--\S\ref{sec:fullstack}) anchor the paper's central finding; synthetic evaluation (\S\ref{sec:scale}--\S\ref{sec:eps-sweep}) demonstrates that the algorithm suite of \S\ref{sec:approach} scales far beyond the tested deployment sizes and that the observed Pareto shape generalizes. Then, two robustness sections (\S\ref{sec:multichannel}--\S\ref{sec:noise}) probe the framework against a channel-pivoting adversary and noisy benchmarks. Finally, we highlight the scope and limitations of our experiments in \S\ref{sec:scope}.

\subsection{Cloud setup and threat model}
\label{sec:cloud-setup}

We evaluate whether an MTD candidate pool delivers the indistinguishability it nominally offers, in a deployment where a defender rotates among a known configuration set $S$ and a passive adversary co-located in the same cloud region measures response latency. The adversary issues constant-rate HTTP probes and attempts to identify the active configuration; it does not exploit the service, read memory, or observe any signal other than end-to-end latency. Other quantitative channels---CPU footprint, memory pressure, response size, energy---are covered by the framework (\S\ref{sec:multichannel}) but not measured on the cloud setup.

The adversary's discriminatory power is captured by a single parameter, $\varepsilon$: the smallest latency difference it can reliably resolve. Smaller $\varepsilon$ denotes a sharper adversary. We treat $\varepsilon$ as the free axis of the evaluation and report every result as a function of it, so a defender can read off the operating point matching their threat model rather than commit to a single estimate.

The $\varepsilon$-close guarantee is per-observation, and what repeated observation buys the adversary is governed by the rotation dynamics. Probing one active configuration $n$ times shrinks the resolvable latency difference toward $\sigma / \sqrt{n}$, where $\sigma$ is the path jitter (\S1), but rotation caps the accumulation: a configuration that stays active for dwell time $\Delta$ against probe rate $r$ yields at most $n = r\Delta$ samples, so the adversary's achievable precision is bounded below by roughly $\sigma / \sqrt{r\Delta}$, and a defender should calibrate $\varepsilon$ to this dwell-limited floor rather than to the single-probe jitter. Across rotations the schedule itself must not leak, and prior MTD work supplies exactly this ingredient: randomized switching strategies---game-theoretic, multi-stage, and/or learned over repeated interaction \citep{sengupta2017game,sengupta2020survey}---denying the adversary a predictable observation window. Pool safety composes with these policies rather than replacing them: the switching strategy governs when and to what the system moves, while the $\varepsilon$-close pool guarantees that whichever member is active, its latency signature stays within $\varepsilon$ of every other member's.

From per-configuration latency samples the defender fits an additive ordinary-least-squares model $U(s) = \sum_{j=1}^{k} w_j[\ell_j] + \eta_s$, where $w_j[\ell]$ is the marginal contribution of component $j$ at level $\ell$ and $\eta_s$ is the residual. The fitted model induces the $\varepsilon$-close pool $C(\tau, \varepsilon) = \{s \in S : |\hat{U}(s) - \tau| \leq \varepsilon\}$: configurations whose predicted latency falls within $\varepsilon$ of an observed target $\tau$. Its cardinality $|C(\varepsilon)|$ is the adversary's residual anonymity set: large means strong MTD; small means the adversary has nearly localized the active configuration.

\paragraph{Instantiation on AWS.}
We realize this setup on AWS in the \texttt{us-east-1} region. The adversary runs on a compromised \texttt{t3.small} instance in one availability zone; the defender operates in another. Inter-AZ round-trip jitter is approximately $0.1$~ms and sets the practical floor on $\varepsilon$ for this environment. Serverless configurations (\S\ref{sec:runtime}) are exercised via 500 sequential warm invocations per Lambda through \texttt{curl}. Three-tier configurations (\S\ref{sec:fullstack}) are probed by \texttt{wrk2}\footnote{\url{https://github.com/giltene/wrk2}} at a fixed target rate; \texttt{wrk2} back-corrects each sample for scheduling delay, so the recorded latency reflects the server rather than the client's queue. Each three-tier request serves \texttt{GET /bench}---a 1\,kB response after a fixed CPU operation---and executes a 100-row \texttt{SELECT} against an RDS-hosted database. Per-configuration latencies are averaged before OLS fitting.

\subsection{Single-axis rotation: serverless runtime}
\label{sec:runtime}

\begin{figure}[ht!]
\centering
\begin{subfigure}[t]{0.49\linewidth}
\begin{tikzpicture}
\begin{axis}[panel, ybar, bar width=20pt,
  ylabel={Mean latency (ms)}, xlabel={CPython runtime version},
  xtick={1,2,3,4}, xticklabels={3.9, 3.10, 3.11, 3.12},
  ymin=1.9, ymax=2.85, ytick={2.0,2.2,2.4,2.6,2.8},
  nodes near coords,
  nodes near coords style={font=\footnotesize, /pgf/number format/.cd, fixed, precision=3},
  title={Measured latency per runtime ($n{=}500$)},
  enlarge x limits=0.22,]
  \addplot+[fill=Clam!55, draw=Clam!85, error bars/.cd, y dir=both, y explicit,]
    coordinates {(1,2.1966)+-(0,1.409) (2,2.1418)+-(0,1.165)
                 (3,2.5195)+-(0,1.361) (4,2.1684)+-(0,1.282)};
  \draw[dashed, NeutralGray, thick]
    (axis cs:0.5,2.169)--(axis cs:4.5,2.169)
    node[pos=0.12, above, font=\scriptsize, text=NeutralGray]{cluster};
  \node[font=\scriptsize\itshape, text=red!75, anchor=south] at (axis cs:3,2.60){+0.38\,ms};
\end{axis}
\end{tikzpicture}
\caption{All four versions are one tool (CPython), so a single color is used; 3.11 is the
  lone outlier above the \{3.9, 3.10, 3.12\} cluster. Error bars are $\pm$1 std.\ dev.}
\end{subfigure}
\hfill
\begin{subfigure}[t]{0.49\linewidth}
\begin{tikzpicture}
\begin{axis}[panel,
  xlabel={Adversary precision $\varepsilon$ (ms) --- smaller $=$ stronger attacker},
  ylabel={Pool size $|C(\varepsilon)|$ (larger $=$ better hidden)},
  xmin=0, xmax=0.6, ymin=2.5, ymax=4.5, ytick={3,4},
  yticklabel style={font=\small},
  title={Indistinguishable configurations vs.\ precision},]
  \fill[red!9]   (axis cs:0,2.5)     rectangle (axis cs:0.349,4.5);
  \fill[green!9] (axis cs:0.349,2.5) rectangle (axis cs:0.6,4.5);
  \node[font=\scriptsize, text=red!65,   align=center] at (axis cs:0.175,3.6)
    {\textbf{3.11 leaked}\\(attacker wins)\\$|C|{=}3$};
  \node[font=\scriptsize, text=green!50!black, align=center] at (axis cs:0.475,3.6)
    {\textbf{fully hidden}\\(all merge)\\$|C|{=}4$};
  \addplot[pool] coordinates {(0,3)(0.349,4)(0.6,4)};
  \draw[dashed, gray!75, line width=0.8pt]
    (axis cs:0.349,2.5)--(axis cs:0.349,4.5);
  \node[font=\scriptsize, text=gray!60, anchor=south] at (axis cs:0.349,2.55)
    {$\varepsilon^*{=}0.349$};
\end{axis}
\end{tikzpicture}
\caption{One clean step at $\varepsilon^{*}{=}0.349$~ms, the size of 3.11's
  latency gap. \textbf{Red (left):} a sharp adversary ($\varepsilon<\varepsilon^{*}$)
  resolves that gap and isolates Python~3.11, shrinking the pool to $|C|{=}3$.
  \textbf{Green (right):} a blunter adversary cannot, so all four runtimes merge into
  one indistinguishable pool ($|C|{=}4{=}|S|$). Larger pool $=$ better defense.}
\end{subfigure}
\caption{\textbf{Serverless Runtime Rotation ($k{=}1$, $|S|{=}4$).}
  A single active axis with four distinct levels yields a Pareto step.
  The attack works exactly as the model predicts, and all four pool-discovery
  algorithms agree on the step location.}
\label{fig:runtime}
\end{figure}

The single-axis instance rotates four AWS Lambda functions running distinct CPython versions (3.9--3.12) behind the same request handler. The handler performs no blocking I/O, so the measured latency reflects interpreter behavior directly. \Cref{fig:runtime}(a) shows the mean latency of each runtime over 500 warm invocations: CPython~3.9, 3.10, and 3.12 fall within $0.06$~ms of one another ($2.142$--$2.197$~ms), while CPython~3.11 sits at $2.520$~ms---a $0.378$~ms gap above the cluster.

The Pareto curve (\cref{fig:runtime}(b)) has a single step at $\varepsilon^{*} = 0.349$~ms, the size of the 3.11 gap. Against an adversary of precision $\varepsilon < \varepsilon^{*}$ (red region), the pool is $|C| = 3$: CPython~3.11 is unconditionally identified while the other three runtimes remain indistinguishable. Against a blunter adversary, say one who resides in a different cloud region with cross-region access (i.e., $\varepsilon \geq \varepsilon^{*}$, green region), all four merge into a single pool of $|C| = 4$. All four pool-discovery algorithms of \S\ref{sec:approach} return the same step location, confirming it is exact and not a discretization artifact; full enumeration over $|S| = 4$ completes in roughly $5\,\mu$s.

The direction of the 3.11 gap is worth pausing on: published interpreter benchmarks generally place CPython~3.11 \emph{ahead} of its neighbors on CPU-bound workloads, so a reader familiar with those benchmarks may expect the opposite result. The gap itself is not an artifact of noise: per-invocation jitter gives each runtime a standard deviation of $1.2$--$1.4$~ms, but at $n = 500$ the standard error of each mean is roughly $0.06$~ms, so the $0.378$~ms gap separates the means by more than four standard errors of their difference, and all invocations are warm, ruling out cold-start initialization as an explanation.
What the measurement cannot isolate is why the direction reverses here: the deployed artifact couples the interpreter with a version-specific Lambda runtime image, and either the adaptive interpreter's behavior on this particular handler or the runtime layer itself may be responsible. The distinction does not matter for our argument: we attribute the gap to the deployed configuration rather than to CPython~3.11 in general, and the conclusions that follow depend only on the deployed configurations differing measurably, not on why they differ. Had 3.10 been the outlier instead of 3.11, the framework would report the identical $|C|=3$ split at the identical threshold, because it operates on measured latency values, not on which runtime produced them.

The takeaway is narrow but concrete: a defender who deploys four-runtime rotation expects four-way anonymity but obtains three-way anonymity against a VPC-adjacent adversary, and CPython~3.11 leaks its identity unconditionally. The results also show that even runtime is expected to be faster, yet the gap is a stable fingerprint that persists across all 500 invocations that can be resolved by the adversary. The absence of blocking I/O is the condition that lets the interpreter-level signal reach the wire; the next case study shows what happens when this condition fails. 

\subsection{Full-stack rotation: three-tier deployment}
\label{sec:fullstack}

\begin{figure}[ht!]
\centering
\begin{subfigure}[t]{0.49\linewidth}
\begin{tikzpicture}
\begin{axis}[panel, ybar, bar width=15pt,
  ylabel={OLS weight (ms)},
  xtick={1,2,3,4,5,6,7,8,9},
  xticklabels={nginx,Apache,Caddy,py3.9,py3.11,py3.12,pg14,pg16,MySQL8},
  x tick label style={rotate=42, anchor=east, font=\footnotesize},
  ymin=-9, ymax=15, ytick={-5,0,5,10},
  enlarge x limits=0.05,
  legend style={at={(0.5,0.98)}, anchor=north, legend columns=3,
                font=\scriptsize, draw=gray!40, inner sep=2pt},
  title={Per-component latency contribution (OLS weight)},]
  \addplot[ybar, bar shift=0pt, fill=Cnginx!75,  draw=Cnginx!90] coordinates {(1,7.84)};
    \addlegendentry{nginx}
  \addplot[ybar, bar shift=0pt, fill=Capache!75, draw=Capache!90] coordinates {(2,11.06)};
    \addlegendentry{Apache}
  \addplot[ybar, bar shift=0pt, fill=Ccaddy!75,  draw=Ccaddy!90] coordinates {(3,8.38)};
    \addlegendentry{Caddy}
  \addplot[ybar, bar shift=0pt, fill=Cpyv!75,    draw=Cpyv!95] coordinates {(4,7.84)(5,7.84)(6,7.84)};
    \addlegendentry{CPython (degen.)}
  \addplot[ybar, bar shift=0pt, fill=Cpg!70,     draw=Cpg!90] coordinates {(7,7.84)(8,5.56)};
    \addlegendentry{PostgreSQL}
  \addplot[ybar, bar shift=0pt, fill=Cmysql!75,  draw=Cmysql!90] coordinates {(9,-5.93)};
    \addlegendentry{MySQL}
  \draw[dashed,gray!45] (axis cs:3.5,-9)--(axis cs:3.5,11);
  \draw[dashed,gray!45] (axis cs:6.5,-9)--(axis cs:6.5,11);
  \draw[gray!70,line width=0.6pt] (axis cs:0.4,0)--(axis cs:9.6,0);
  \node[font=\scriptsize\itshape,text=gray!70] at (axis cs:2,-7.3){web};
  \node[font=\scriptsize\itshape,text=gray!70] at (axis cs:5,-7.3){python};
  \node[font=\scriptsize\itshape,text=gray!70] at (axis cs:8,-7.3){db};
\end{axis}
\end{tikzpicture}
\caption{Each web server (nginx/Apache/Caddy) and database (PostgreSQL/MySQL) is drawn in
  its own color. The DB axis dominates (13.8\,ms spread); the three CPython versions are
  degenerate (identical gray bars). MySQL's negative weight is its advantage over the
  pg14 reference.}
\end{subfigure}
\hfill
\begin{subfigure}[t]{0.49\linewidth}
\begin{tikzpicture}
\begin{axis}[panel,
  xlabel={Adversary precision $\varepsilon$ (ms) --- smaller $=$ stronger attacker},
  ylabel={Pool size $|C(\varepsilon)|$ (larger $=$ better hidden)},
  xmin=0, xmax=14, ymin=0, ymax=29, ytick={3,9,18,27}, xtick={0,4,8,12},
  title={Indistinguishable configurations vs.\ precision},]
  \draw[AccentOrange!60, line width=0.8pt] (axis cs:0,27)--(axis cs:14,27);
  \node[font=\scriptsize, text=AccentOrange, anchor=north east] at (axis cs:13.8,27)
    {$|S|{=}27$ (full pool)};
  \addplot[pool] coordinates {(0,3)(1.9,9)(3.4,15)(4.9,18)(9.4,21)(12.5,27)(14,27)};
  \draw[dashed, gray!75, line width=0.8pt] (axis cs:1.9,0)--(axis cs:1.9,9);
  \node[font=\scriptsize, text=gray!70, anchor=south west, align=left]
    at (axis cs:2.2,9.5) {DB resolved\\$\varepsilon{=}1.9$\,ms: $|C|{=}9$};
  \draw[dashed, gray!55, line width=0.7pt] (axis cs:12.5,0)--(axis cs:12.5,27);
  \node[font=\scriptsize, text=gray!60, anchor=south east] at (axis cs:12.3,18)
    {full at 12.5\,ms};
\end{axis}
\end{tikzpicture}
\caption{Large steps set by the DB latency gaps. One DB group (9 web$\times$Python
  combos) merges at $\varepsilon{=}1.9$~ms; the pool saturates to $|S|{=}27$
  only at $12.5$~ms---over $30\times$ the precision Runtime Rotation needed.}
\end{subfigure}
\caption{\textbf{Full-Stack Rotation (web$\times$Python$\times$DB, $k{=}3$, $|S|{=}27$).}
  One experiment, two opposite axis behaviors: the database axis dominates with an
  order-of-magnitude larger spread than any other component, while the Python axis is
  fully degenerate behind an $8$~ms I/O floor. The degenerate axis is all that survives
  fine precision ($|C|=3$ at the VPC jitter floor); the leak is concentrated in a single
  high-value axis.}
\label{fig:fullstack}
\end{figure}

The three-tier instance rotates axes at every layer of the serving stack: each request is served by one of three web servers (nginx, Apache, Caddy), executed by one of three Python runtimes (3.9, 3.11, 3.12), and backed by a 100-row \texttt{SELECT} against one of three databases (PostgreSQL~14, PostgreSQL~16, MySQL~8). The adversary attempts to localize the active configuration in the $3 \times 3 \times 3 = 27$-point grid.

\Cref{fig:fullstack}(a) shows the fitted OLS weights. The database axis spans $13.8$~ms: MySQL~8 sits $13.8$~ms below PostgreSQL~14, with PostgreSQL~16 in between. The web-server axis spans $3.2$~ms from nginx to Apache. The Python axis spans \emph{zero}: all three interpreter weights are numerically identical. The database spread traces to the client-library protocol: the MySQL client reaches the query in fewer round-trips than the PostgreSQL driver, which negotiates types and re-authenticates. The interpreter degeneracy is structural rather than incidental: the request path blocks for approximately $8$~ms on the downstream database round-trip regardless of the interpreter version, so no interpreter-specific signal reaches the wire.

\Cref{fig:fullstack}(b) shows the Pareto curve. It climbs in large steps set by the database axis, reaching $|C| = 9$ at $\varepsilon = 1.9$~ms---the point at which one three-configuration database group has been resolved but the nine web-server $\times$ interpreter combinations within it remain indistinguishable---and saturates to $|S| = 27$ only at $\varepsilon = 12.5$~ms, more than thirty times the precision the single-axis rotation required to saturate. Meet-in-the-middle (\S\ref{sec:approach}) makes the axis diagnosis explicit: splitting on the database axis surfaces the nine-way anonymity set at $\varepsilon = 1.9$~ms, while splitting on the CPython axis yields no separating matches at any tolerance.

The two case studies instantiate the visibility regimes of \autoref{cor:budget}. In the serverless rotation the interpreter axis is partially visible at VPC-adjacent precision: the $0.378$~ms gap that isolates CPython~3.11 exceeds the $0.1$~ms jitter floor, while the $0.06$~ms spread of the remaining cluster falls below it, so three of the four levels survive as a pool. In the three-tier stack the same axis is invisible---its fitted spread is zero because the request path blocks for approximately $8$~ms on the database round-trip regardless of interpreter---so its full three-way multiplicity survives at every precision, and it is the sole reason the pool never falls below $|C| = 3$. The database axis sits at the opposite extreme: its adjacent-level gaps ($2.3$~ms and $11.5$~ms) both dwarf the jitter floor, so at fine precision its nominal three-way diversity contributes no anonymity at all, and even a much blunter $1.9$~ms adversary holds the nominal $27$ configurations to nine-way effective anonymity. The comparison yields the diagnostic in its most useful form: an axis converts nominal multiplicity into anonymity only where its per-level latency gaps fall below the adversary's precision, and whether they do is a property of the deployment, not of the axis. The interpreter axis that leaks CPython~3.11 in the serverless rotation, where nothing masks it, supplies the only fine-precision anonymity in the three-tier stack, where a shared I/O floor hides it. A defender should therefore audit each axis's level gaps against the deployment's own noise floor and I/O structure rather than count nominal variants.

\subsection{Scalability of the algorithm suite}
\label{sec:scale}

The cloud case studies exercise the framework at $k \in \{1, 3\}$ with $|S_j| \leq 4$, but realistic MTD deployments extend well beyond this scale. As discussed in the introduction, diversification across the full request path---runtime versions, database backends, interface protocols, TLS libraries, and connection-pool sizing among others---readily produces a dozen or more rotation axes, and multiplying their variants puts the configuration space near $10^{7}$ per service and beyond $10^{15}$ across a multi-service architecture. At those scales exact enumeration and meet-in-the-middle both become infeasible, leaving FFT convolution and Monte Carlo sampling as the only viable options. We therefore evaluate all four algorithms on synthetic datasets spanning $10^{1}$ to $10^{38}$ configurations, chosen to match the additive latency structure fit in \S\ref{sec:cloud-setup}.

\paragraph{Setup.}
Per-component values are drawn uniformly from $[0, 10]$ under three additive utility functions: linear (uniform weights), square-root (linear weights), and quadratic (exponential-decay weights). The product-space sizes span from $|S| = 24$ (matching \S\ref{sec:runtime}) to $|S| = 1.1 \times 10^{38}$ (\autoref{tab:datasets}). Full enumeration and meet-in-the-middle run only on $|S| \leq 2 \times 10^{6}$; FFT convolution (bin width $\delta = 0.05$) and Monte Carlo sampling ($M = 5 \times 10^{5}$) run on all five datasets.

\begin{table}[t]
\centering
\caption{Synthetic datasets used for scalability evaluation.}
\label{tab:datasets}
\vspace{0.4em}
\begin{tabular}{@{}llrrr@{}}
\toprule
Name & $k$ & Representative sizes & $\prod n_i$ & $\varepsilon$ \\
\midrule
\textsc{Tiny}   & 3  & $[3, 2, 4]$     & $2.4 \times 10^1$   & 2.0 \\
\textsc{Small}  & 5  & $[10, 8, 12, 9, 7]$ & $6.0 \times 10^4$ & 1.5 \\
\textsc{Medium} & 6  & $[12, 10, 14, 8, 11, 9]$ & $1.3 \times 10^6$ & 3.0 \\
\textsc{Large}  & 10 & $\sim\! 30$--$50$ each  & $8.4 \times 10^{15}$ & 5.0 \\
\textsc{Huge}   & 20 & $\sim\! 60$--$100$ each & $1.1 \times 10^{38}$ & 2.0 \\
\bottomrule
\end{tabular}
\end{table}

\paragraph{Findings.}
\autoref{tab:results} reports representative results. All exact algorithms return identical $|C|$ on every dataset where both apply, confirming correctness. FFT convolution completes in under $30$~ms even on the $10^{38}$-configuration dataset because its runtime depends on the discretized utility range, not the product-space size; it overcounts by up to a factor of two near bin boundaries at $\delta = 0.05$ (a $5\times$ memory increase to $\delta = 0.01$ closes the gap), but agrees with the exact count whenever the utility distribution is concentrated. Monte Carlo sampling has near-constant runtime ($\sim\!0.2$~s) governed by $M$ alone and tracks exact results within $1\%$ relative error on the datasets where exact ground truth exists. On \textsc{Large} and \textsc{Huge} the FFT and sampling estimates agree to within the FFT discretization bound, providing mutual cross-validation where no exact reference is available.

\begin{table}[t]
\centering
\caption{Selected experimental results. ``Enum'' = Full Enumeration, ``MitM'' = Meet in the Middle, ``FFT'' = FFT Convolution, ``Samp'' = Sampling. Times in seconds.}
\label{tab:results}
\vspace{0.4em}
\small
\begin{tabular}{@{}ll rr rr rr@{}}
\toprule
& & \multicolumn{2}{c}{Enum / MitM} & \multicolumn{2}{c}{FFT} & \multicolumn{2}{c}{Sampling} \\
\cmidrule(lr){3-4} \cmidrule(lr){5-6} \cmidrule(lr){7-8}
Dataset & Utility & $|C|$ & Time & $|C|$ & Time & $|C|$ & Time \\
\midrule
\textsc{Tiny}   & Linear   & 11        & 0.0001 & 24$^\dagger$  & 0.001 & 11       & 0.17 \\
\textsc{Tiny}   & Sqrt     & 24        & 0.0002 & 24            & 0.001 & 24       & 0.14 \\
\textsc{Tiny}   & Quadratic& 4         & 0.0002 & 5$^\dagger$   & 0.003 & 4        & 0.20 \\
\midrule
\textsc{Small}  & Linear   & 28,246    & 0.024  & 56,466$^\dagger$ & 0.001 & 28,293  & 0.21 \\
\textsc{Small}  & Sqrt     & 59,181    & 0.012  & 60,480        & 0.001 & 59,180   & 0.13 \\
\textsc{Small}  & Quadratic& 2,703     & 0.021  & 3,064$^\dagger$  & 0.005 & 2,713   & 0.20 \\
\midrule
\textsc{Medium} & Linear   & 1,162,488 & 0.48   & 1,330,560$^\dagger$ & 0.001 & 1,161,507 & 0.17 \\
\textsc{Medium} & Sqrt     & 1,330,560 & 0.26   & 1,330,560     & 0.001 & 1,330,560 & 0.13 \\
\textsc{Medium} & Quadratic& 176,957   & 0.46   & 188,408$^\dagger$  & 0.005 & 175,931  & 0.21 \\
\midrule
\textsc{Large}  & Linear   & ---       & ---    & $8.4 \times 10^{15}$ & 0.002 & $8.4 \times 10^{15}$ & 0.16 \\
\textsc{Large}  & Quadratic& ---       & ---    & $1.3 \times 10^{15}$ & 0.008 & $1.3 \times 10^{15}$ & 0.21 \\
\midrule
\textsc{Huge}   & Linear   & ---       & ---    & $1.0 \times 10^{38}$ & 0.004 & $9.5 \times 10^{37}$ & 0.22 \\
\textsc{Huge}   & Quadratic& ---       & ---    & $7.3 \times 10^{36}$ & 0.028 & $7.2 \times 10^{36}$ & 0.24 \\
\bottomrule
\end{tabular}

\vspace{4pt}
{\footnotesize $^\dagger$FFT overcount due to discretization at $\delta = 0.05$.}
\end{table}

\subsection{Framework generality across the \texorpdfstring{$\varepsilon$}{epsilon}-axis}
\label{sec:eps-sweep}

The step-function Pareto structure observed on the two cloud case studies is intrinsic to any additive utility over independent per-component distributions, not an artifact of the specific deployments. To confirm this we sweep $\varepsilon$ over $[0.05, 6.0]$ in $60$ steps on \textsc{Medium} (exact enumeration and FFT) and \textsc{Large} (FFT only; exact methods are infeasible at $|S| = 8.4 \times 10^{15}$), reporting the pool fraction $|C|/|S|$ per utility function in \autoref{fig:eps-sweep}.

\begin{figure}[t]
\centering
\includegraphics[width=\linewidth]{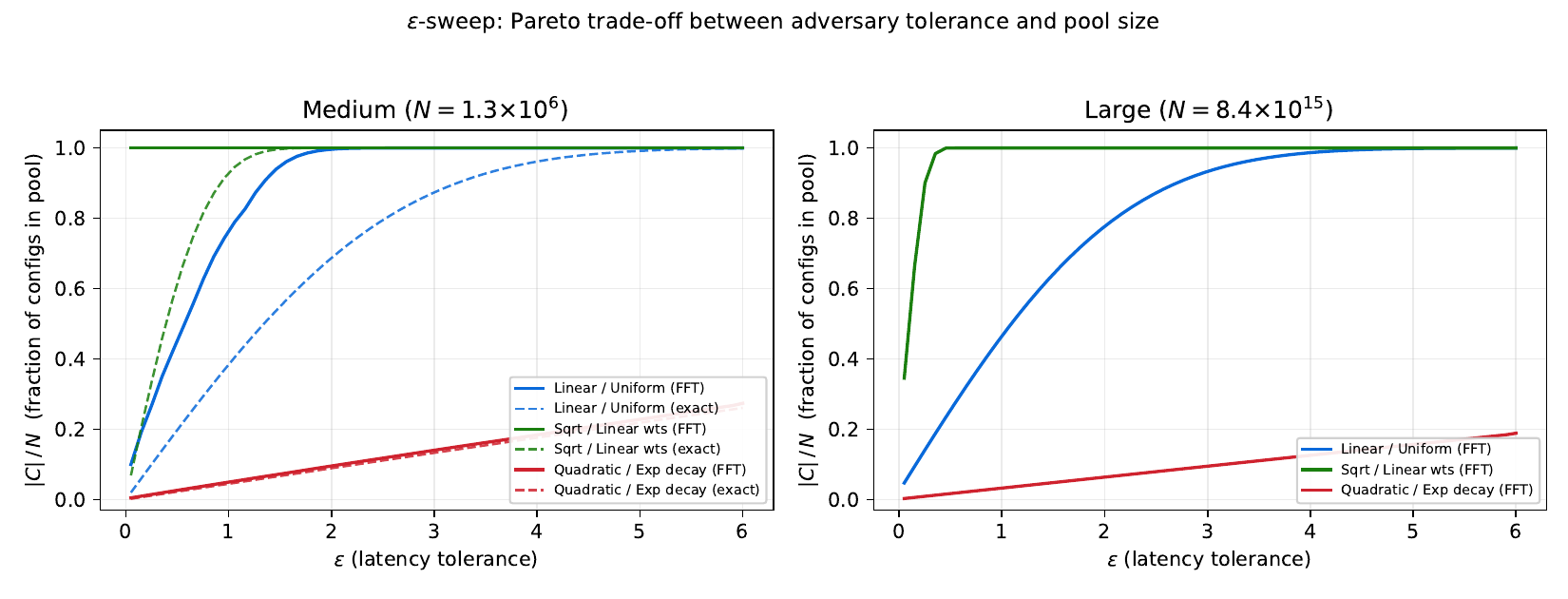}
\caption{Pareto trade-off between adversary precision $\varepsilon$ and configuration-pool fraction $|C|/|S|$.
Solid lines are FFT estimates ($\delta{=}0.05$); dashed lines are exact enumeration where tractable.
The curves reveal an intrinsic, dataset-size-independent shape determined by the utility form: concave (\emph{Sqrt}) saturates rapidly, convex (\emph{Quadratic}) admits only a small pool fraction even at $\varepsilon{=}6$, with linear in between.
A defender selects an operating point on this curve given an estimate of the deployment-path noise floor.}
\label{fig:eps-sweep}
\end{figure}

Three facts follow. First, the \textsc{Medium} and \textsc{Large} curves have near-identical shape per utility, so the Pareto profile is a function of the utility form (concave, linear, convex) rather than the dataset scale---the same shape can be expected of larger deployments constructed under the same additive assumption. Second, the FFT--exact gap on \textsc{Medium} matches the $O(k\delta/\varepsilon)$ analysis of \S\ref{sec:approach}: the Linear curve overestimates by roughly $2\times$ at small $\varepsilon$ and converges as $\varepsilon$ grows, while the Quadratic curve agrees almost everywhere because its wider utility range makes $k\delta/\varepsilon$ small. Third, the curves let a defender read the sensitivity of pool size to precision directly: on \textsc{Large} under Quadratic utility, halving $\varepsilon$ from $6$ to $3$ halves the pool fraction from $0.19$ to $0.09$, so tightening indistinguishability by $2\times$ costs a $2\times$ smaller pool.

\subsection{Multi-channel indistinguishability: a synthetic exploration}
\label{sec:multichannel}

A determined adversary need not stop at latency. Response size, CPU footprint, memory pressure, and energy draw are equally remotely-observable in many deployments, and an adversary will pivot to whichever channel is most discriminative. We argued in \S\ref{sec:approach} that the framework is metric-agnostic; here we make that quantitative. Because the cloud measurements exercise latency only (\S\ref{sec:cloud-setup}), this subsection runs on the synthetic \textsc{Medium} dataset with proxy channels---the numbers below demonstrate the multi-channel formulation, not a cloud measurement. We treat \emph{Linear/Uniform} as a latency proxy and \emph{Quadratic/Exp-decay} as a CPU-or-memory proxy, both on \textsc{Medium} at $\varepsilon = 3$. For each formulation we compute the densest indistinguishable pool: (i) channel~1 alone, (ii) channel~2 alone, (iii) the naive intersection of the two channel-optimal windows (configurations that happen to lie in both single-channel-optimal pools), and (iv) the joint formulation, which optimises $(\ell_1, \ell_2)$ over the 2-D box $[\ell_1, \ell_1 + \varepsilon] \times [\ell_2, \ell_2 + \varepsilon]$ to maximise the count of configurations in both windows simultaneously. Formulation (iv) is solved by the natural 2-D extension of the FFT histogram---a 2-D histogram on $(U_1, U_2)$ followed by a 2-D prefix-sum sliding rectangle---with the same discretization caveats as the 1-D method; we report the \emph{exact} count of configurations actually in the chosen box, which corrects the discretization overestimate.

\begin{figure}[t]
\centering
\includegraphics[width=\linewidth]{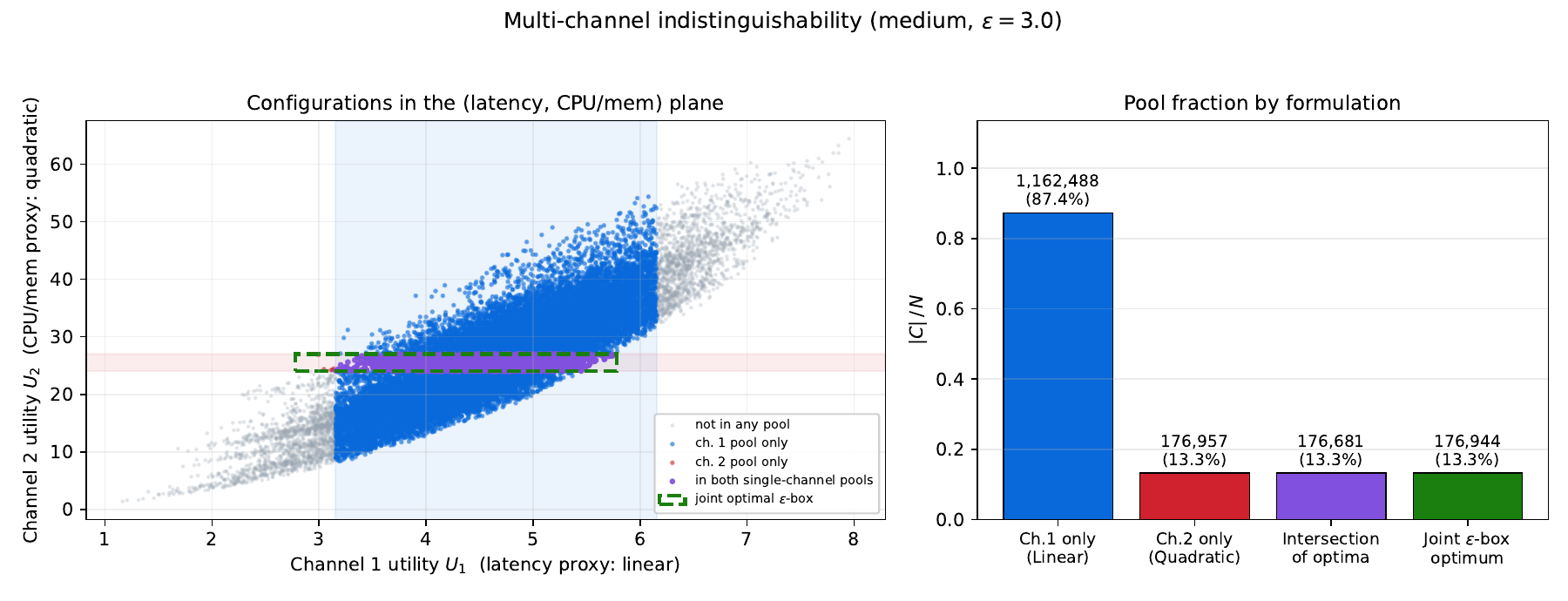}
\caption{Two-channel indistinguishability on \textsc{Medium} at $\varepsilon = 3$.
Left: scatter of 25k random configurations in the (latency proxy, CPU/memory proxy) plane.
Vertical blue band: channel-1 optimal $\varepsilon$-window.
Horizontal red band: channel-2 optimal $\varepsilon$-window.
Green dashed box: joint optimal $\varepsilon$-box.
Purple points lie in both single-channel-optimal pools.
Right: pool fraction $|C|/|S|$ for each formulation.
A defender who plans only against latency would believe they retain $87\%$ of configurations; honest accounting against \emph{both} channels admits only $13\%$.}
\label{fig:multichannel}
\end{figure}

\autoref{fig:multichannel} surfaces three facts. First, the joint pool size ($13.30\%$) is essentially equal to the narrower channel's single-channel pool ($13.30\%$): the more selective channel dominates. Second, the naive intersection ($13.28\%$) is nearly indistinguishable from the joint optimum here, indicating that on this dataset the channel-2 optimal window happens to lie almost entirely within the channel-1 optimal window---a coincidence that is not guaranteed in general but holds when channels are positively correlated, as the scatter plot shows. Third, and most consequential if the pattern carries to real deployments: a defender who calibrates $\varepsilon$ to one channel alone overstates the indistinguishable pool by $6.5\times$ on this synthetic instance ($87\% \to 13\%$). The recipe is straightforward: run the pool-selection procedure once per quantitative channel (or jointly via formulation (iv)), and treat the intersection as the true anonymity budget.

\subsection{Benchmark noise as a practical caveat}
\label{sec:noise}

The framework assumes the defender knows per-component utilities $u(s_i)$ exactly, but real deployments estimate them from benchmarks. The noise studied here is therefore the \emph{defender's}---error in the benchmark estimates of $u(s_i)$---and is distinct from the adversary's per-observation variance, which is already folded into the precision parameter $\varepsilon$ by the dwell-time argument of \S\ref{sec:cloud-setup}. To quantify the resulting exposure we add i.i.d.\ Gaussian noise $\xi_c \sim \mathcal{N}(0, \sigma^2)$ to each true configuration utility $U(c)$, run pool selection on $U_{\text{noisy}} = U + \xi$, and measure the \emph{true} latency span $\max_{c \in C} U(c) - \min_{c \in C} U(c)$ of the selected pool. Whenever this span exceeds the target $\varepsilon$, the pool is not actually $\varepsilon$-indistinguishable in deployment, regardless of what the defender's algorithm reported.

\begin{figure}[t]
\centering
\includegraphics[width=\linewidth]{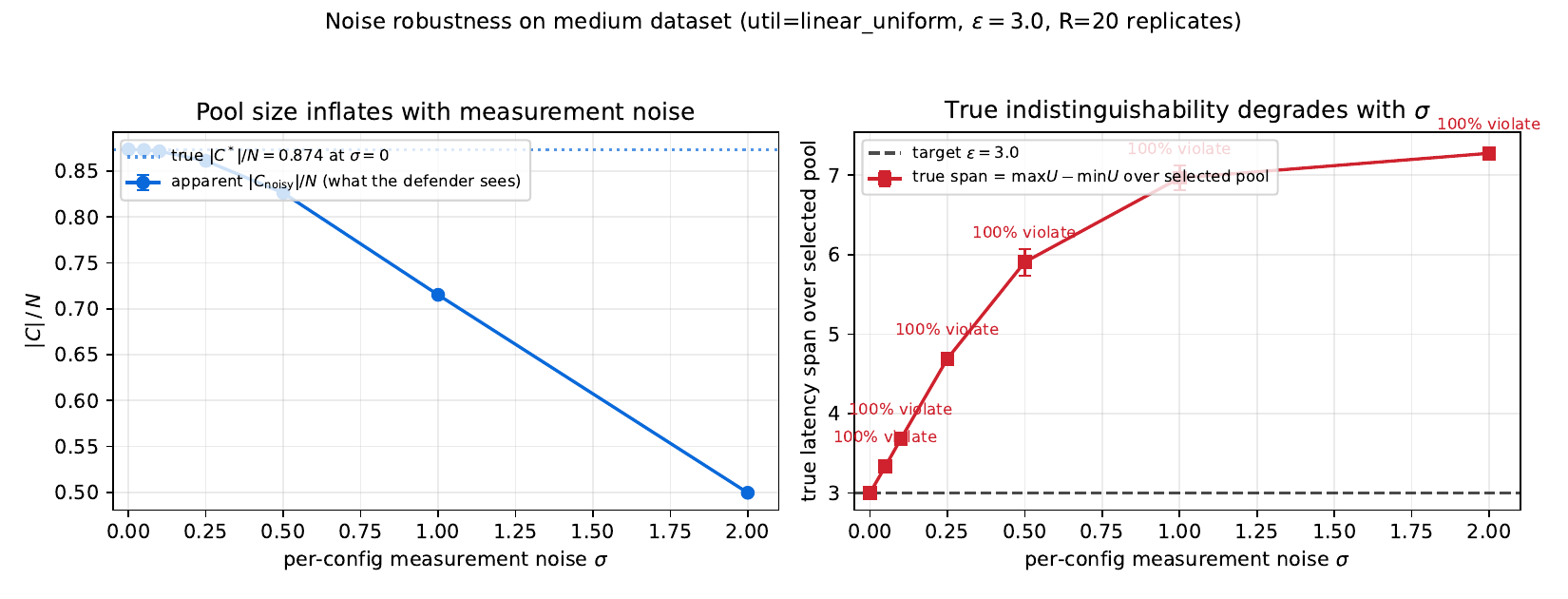}
\caption{Effect of per-configuration benchmark noise $\sigma$ on pool selection (\textsc{Medium}, Linear/Uniform, $\varepsilon = 3$, 20 replicates per $\sigma$).
Left: apparent pool fraction $|C_{\text{noisy}}|/|S|$ that the defender's algorithm reports.
Right: true latency span of that pool under the noise-free utility.
The dashed line is the target $\varepsilon$.
Even tiny noise ($\sigma = 0.05$) inflates the true span above $\varepsilon$ in $100\%$ of replicates.}
\label{fig:noise}
\end{figure}

\autoref{fig:noise} shows a sharp asymmetry between apparent and actual indistinguishability. The apparent pool size (left) is robust to modest noise: at $\sigma = 0.25$ the defender still believes they have a pool covering $86\%$ of configurations, only $1.5$ percentage points below the noise-free truth. The actual span (right) degrades immediately: at $\sigma = 0.05$, $100\%$ of replicates produce pools whose true span exceeds $\varepsilon$ (mean $3.34 > 3.0$), and at $\sigma = 0.5$ the span has doubled to $5.9$. Two mitigations follow. First, averaging benchmark measurements over $T$ trials drives the per-configuration noise variance to $\sigma_0^2 / T$, so $T = O((\sigma_0 / \sigma_{\text{tol}})^2)$ trials per configuration suffice for tolerance $\sigma_{\text{tol}}$. Second, shrinking the design tolerance from $\varepsilon$ to $\varepsilon - c\sigma$ leaves headroom that absorbs the residual span, with the constant $c$ chosen by calibration on a held-out replicate set. Either mitigation closes the apparent--true gap the figure exposes.

\subsection{Scope and limitations}
\label{sec:scope}

The cloud case studies exercise two rotation designs on a single provider, in a single region, against a passive adversary co-located in an adjacent availability zone. VPC-adjacent adversary precision was measured directly; the internet-scale precision cited for context is drawn from prior remote timing work \citep{seibert2014information} and not measured here. The threat model is passive and the candidate pool is public (\S\ref{sec:cloud-setup}); what must remain hidden is the active configuration, and the dwell-time argument of \S\ref{sec:cloud-setup} bounds what repeated passive observation achieves, but we do not model an active adversary that adaptively times probes around rotation events. Pool safety also composes with, rather than replaces, the switching policy: a schedule that is itself observable---say, fixed round-robin with a known phase---can reveal the active configuration even from an $\varepsilon$-close pool, and the randomized game-theoretic switching strategies of prior MTD work \citep{sengupta2017game,sengupta2020survey} are the natural complement. The equalization-at-source defense implied by the \S\ref{sec:fullstack} diagnostic---warming a uniform connection pool so that the interpreter axis's shared $8$~ms floor is removed---is proposed, not empirically validated; testing it is future work. Latency is the primary channel we measure on the cloud setup; other quantitative channels are covered by the framework (\S\ref{sec:multichannel}) but not exercised on real hardware. Finally, the fitted OLS model assumes additive per-component contributions; interaction effects, if present, are absorbed into the residual $\eta_s$ and would degrade predictive accuracy in a way our two case studies did not surface.

\section{Related Work}\label{sec:relatedwork}

\paragraph{Moving target defense.}
Moving Target Defense, introduced as a proactive-defense paradigm by \citet{jajodia2011moving}, has matured along three axes---which configurations to deploy, how to switch, and when to switch---systematized in \citet{sengupta2020survey}, with randomized game-theoretic strategies addressing the how and the when \citep{sengupta2017game}. A closely related diversity paradigm is N-variant execution \citep{cox2006nvariant}, in which multiple diversified variants run in parallel and disagreement between them signals an exploit; our pool-safety framework applies directly to the temporal-rotation setting and, with modification, to the parallel-execution setting, where the observable derives from all running variants rather than from a single active choice. This body of work consistently treats the candidate pool as given input and safe to cycle among; whether the pool itself leaks the active configuration through observable behavior has received limited formal treatment. Closest to our motivation, \citet{okhravi2016moving} identify information leakage via side channels as a fundamental weakness across the five domains of MTD (dynamic platforms, runtime environments, software diversification, data encoding, network properties), and report that a single observable timing value suffices to narrow down $45\%$ of diversified software functions to an uncertainty set of ten or fewer candidates; their conclusion is that MTD techniques must periodically re-randomize system internals. We take a complementary approach: we frame pool-safety as a formal indistinguishability property, quantify it on production cloud deployments, and give the algorithmic tooling to enforce it at scale.

\paragraph{Side-channel attacks on diversified systems.}
Timing side channels have been extensively studied in cryptographic implementations \citep{kocher1996timing} and cache behavior \citep{bernstein2005cache}. \citet{crosby2009opportunities} established the remote feasibility of these attacks over WAN paths, demonstrating that microsecond-scale differences can be recovered through the $\sqrt{n}$ averaging invoked in \S1. Most directly relevant to MTD, \citet{seibert2014information} demonstrate remote timing side-channel attacks that infer memory layout from diversified code without requiring any memory disclosure vulnerability, showing that timing measurements alone uniquely fingerprint up to $60\%$ of diversified functions in \texttt{libc}. Dynamic software diversity has been proposed as a defense against microarchitectural timing channels \citep{crane2015thwarting}; our contribution takes the analogous defensive move at the system-configuration granularity for network-level timing. Our threat model transposes this signal from code diversification to configuration diversification: rather than fingerprinting individual functions, the adversary identifies the active configuration among those in the MTD pool by observing response latency. The resulting notion of $|C|$-way indistinguishability is directly analogous to $k$-anonymity in privacy-preserving data release \citep{sweeney2002kanonymity}, transposed from record equivalence classes to configuration equivalence classes under an observable latency channel. We contribute both a formal indistinguishability constraint that eliminates this signal by construction---configurations in the selected pool differ in latency by at most $\varepsilon$---and end-to-end measurement of the constraint's slack on production cloud infrastructure, extending prior remote-timing evidence from code-fingerprinting granularity to system-configuration anonymity.

\paragraph{Algorithmic techniques.}
The algorithms we compose are individually classical. The two-pointer sliding window over sorted data is elementary; we apply it to a sumset that arises implicitly from a $k$-fold Cartesian product. Meet-in-the-middle originates in Horowitz and Sahni's treatment of the subset-sum problem \citep{horowitz1974computing}. FFT convolution for counting representations in a sumset is a standard use of the discrete Fourier transform \citep{cooley1965algorithm,cormen2009introduction} that we lift to $k$-fold convolution over discretized utility histograms. Random sampling of additive-utility configuration spaces has precedent in software product lines \citep{oh2017near}. The novelty of our algorithmic contribution is not in any single technique but in their composition into a suite that covers realistic MTD scales, together with the uniform-convergence analysis that makes the sampling algorithm sound when the sliding window is chosen from the samples themselves.

\section{Conclusion}

Moving Target Defense research has matured around which configurations to deploy, how to switch, and when---while treating the candidate pool's own indistinguishability as an assumption rather than a property to verify. This paper made that property both verifiable and enforceable. We formalized pool safety as the largest $\varepsilon$-close subset of the configuration product, gave algorithms that construct that subset at every realistic scale, and measured the framework end-to-end on live cloud deployments. Both rotations we audited fall short of their nominal anonymity, and the shortfalls carry a general lesson: whether an axis contributes anonymity is decided by the deployment that hosts it, not by the axis itself. A defender should therefore audit each axis's latency gaps against the deployment's own noise floor, and rotate among the largest pool the algorithms certify indistinguishable rather than among every nominal variant. Two extensions remain open. The guarantee is per-observation: dwell-time limits and randomized switching already bound what a patient adversary gains by averaging (\S\ref{sec:cloud-setup}), but the two compose with the $\varepsilon$-close pool by argument rather than by proof, and a distributional guarantee over entire rotation sequences would make that composition rigorous. The guarantee is likewise per-channel: an adversary denied the latency channel will pivot to response size or CPU footprint, and enforcing indistinguishability jointly across observable channels would close that pivot.

\bibliographystyle{plainnat}
\bibliography{references}

\newpage

\appendix

\section{Proofs}\label{sec:proofs}

\subsection{Proof of \autoref{prop:reduction} (Pairwise to Range Reduction)}

\begin{proof}
($\Rightarrow$) Suppose $|U(c_1) - U(c_2)| \leq \varepsilon$ for all $c_1, c_2 \in C$.
In particular, this holds for the pair achieving the maximum and minimum utilities in $C$, so
\[
  \max_{c \in C} U(c) - \min_{c \in C} U(c) \leq \varepsilon.
\]

($\Leftarrow$) Suppose $\max_{c \in C} U(c) - \min_{c \in C} U(c) \leq \varepsilon$.
For any $c_1, c_2 \in C$, we have
\[
  \min_{c \in C} U(c) \leq U(c_1), U(c_2) \leq \max_{c \in C} U(c),
\]
and therefore
\[
  |U(c_1) - U(c_2)| \leq \max_{c \in C} U(c) - \min_{c \in C} U(c) \leq \varepsilon. \qedhere
\]
\end{proof}

\subsection{Proof of \autoref{cor:budget} (Per-Axis Budget)}

\begin{proof}
For a product set $C = C_1 \times \cdots \times C_k$, the additive form~\eqref{eq:utility} lets the maximum and minimum of $U$ over $C$ be taken coordinate-wise:
\[
  \max_{c \in C} U(c) = \sum_{i=1}^{k} \max_{s \in C_i} w_i\, u(s),
  \qquad
  \min_{c \in C} U(c) = \sum_{i=1}^{k} \min_{s \in C_i} w_i\, u(s),
\]
since each coordinate of the maximizing (resp.\ minimizing) configuration can be chosen independently within its $C_i$. Subtracting the two identities gives
\[
  \max_{c \in C} U(c) - \min_{c \in C} U(c) = \sum_{i=1}^{k} \sigma_i(C_i),
\]
and applying \autoref{prop:reduction} to the left-hand side yields the claimed equivalence. \qedhere
\end{proof}

\subsection{Proof of \autoref{prop:sampling} (Uniform Window-Density Bound)}

\begin{proof}
We draw $M$ configurations $c^{(1)}, \ldots, c^{(M)}$ independently and uniformly at random from the Cartesian product $S_1 \times \cdots \times S_k$.
Let $F$ be the CDF of the random variable $U(c)$ and $F_M$ its empirical counterpart on the $M$ samples.
The Dvoretzky--Kiefer--Wolfowitz (DKW) inequality \citep{dvoretzky1956asymptotic}, with the tight constant of \citet{massart1990tight}, states that
\[
  \Pr\!\left[\sup_{x \in \mathbb{R}} |F_M(x) - F(x)| > t\right] \leq 2e^{-2Mt^2}.
\]
Crucially, this bound is \emph{uniform} over all $x \in \mathbb{R}$, so it applies simultaneously to every candidate window---including windows whose endpoints are chosen as a function of the samples.
Setting the right-hand side to $\alpha$ gives $t = \sqrt{\tfrac{1}{2M}\log\tfrac{2}{\alpha}}$, and with probability at least $1-\alpha$,
\begin{equation}\label{eq:dkw-uniform}
  \sup_{x \in \mathbb{R}} |F_M(x) - F(x)| \leq t.
\end{equation}

Because $U$ takes finitely many values, $F$ has atoms, and the closed window $[\ell, \ell + \varepsilon]$ must be written with a left limit: $p(\ell) = F(\ell + \varepsilon) - F(\ell^-)$ and $\hat p_M(\ell) = F_M(\ell + \varepsilon) - F_M(\ell^-)$, where $F(\ell^-) = \lim_{y \uparrow \ell} F(y)$. The uniform bound extends to left limits, since $|F_M(x^-) - F(x^-)| = \lim_{y \uparrow x} |F_M(y) - F(y)| \leq \sup_{y} |F_M(y) - F(y)|$. By the triangle inequality, for any window $[\ell, \ell + \varepsilon]$,
\[
  |\hat p_M(\ell) - p(\ell)|
  \leq |F_M(\ell + \varepsilon) - F(\ell + \varepsilon)| + |F_M(\ell^-) - F(\ell^-)|
  \leq 2 \sup_{x} |F_M(x) - F(x)|
  \leq 2t.
\]
Taking the supremum over $\ell$ on the left-hand side preserves the bound (the right-hand side does not depend on $\ell$), giving the proposition:
\[
  \sup_{\ell \in \mathbb{R}} |\hat p_M(\ell) - p(\ell)| \leq 2t \quad \text{with probability } \geq 1-\alpha.
\]
Solving $2t \leq \varepsilon_{\text{stat}}$ for $M$ yields $M \geq \tfrac{2}{\varepsilon_{\text{stat}}^{2}} \log(2/\alpha) = O(\varepsilon_{\text{stat}}^{-2}\log(1/\alpha))$.
\end{proof}

\subsection{Proof of \autoref{cor:selection} (Selection-Bias-Aware Near-Optimality)}

\begin{proof}
Both argmaxes are attained: since $U$ takes finitely many values, the window contents---and hence $\hat p_M(\ell)$ and $p(\ell)$---take only finitely many values as $\ell$ ranges over $\mathbb{R}$.
The sliding-window pass returns such an argmax: shifting any window right until its left endpoint meets its smallest contained sample loses no samples, so some sample-anchored window---exactly the candidates the two-pointer pass enumerates---achieves the global maximum of $\hat p_M$.
Condition on the event in~\eqref{eq:dkw-uniform}, which holds with probability $\geq 1-\alpha$, so that $|\hat p_M(\ell) - p(\ell)| \leq 2t$ for every $\ell \in \mathbb{R}$ simultaneously.
By the definition of $\hat\ell$ as the empirical argmax,
\begin{equation}\label{eq:argmax}
  \hat p_M(\hat\ell) \;\geq\; \hat p_M(\ell^*).
\end{equation}
Apply uniform convergence at $\ell^*$ and at $\hat\ell$:
\begin{align*}
  \hat p_M(\ell^*) &\geq p(\ell^*) - 2t \;=\; p^* - 2t, \\
  p(\hat\ell)      &\geq \hat p_M(\hat\ell) - 2t.
\end{align*}
Chaining these with~\eqref{eq:argmax} gives
\[
  p(\hat\ell) \;\geq\; \hat p_M(\hat\ell) - 2t
              \;\geq\; \hat p_M(\ell^*) - 2t
              \;\geq\; p^* - 4t,
\]
i.e., $p^* - p(\hat\ell) \leq 4t$.
Substituting $t = \sqrt{\tfrac{1}{2M}\log\tfrac{2}{\alpha}}$ yields the stated bound.
The selection bias is absorbed entirely by the supremum in~\eqref{eq:dkw-uniform}: because the DKW bound is uniform, it covers \emph{any} window the algorithm might choose from the samples, including the empirical argmax.
\end{proof}

\section{Empirical validation of the selection-bias bound}\label{sec:selbias-app}

\autoref{cor:selection} guarantees that, despite reusing samples to both select the densest window and estimate its density, the \emph{true} density of the empirically-selected window is within $4t = 4\sqrt{\tfrac{1}{2M}\log\tfrac{2}{\alpha}}$ of the optimum with probability $\geq 1-\alpha$.
We verify this empirically on \textsc{Medium} (where $p^*$ is computable exactly).
For each $M \in \{10^3, 3{\cdot}10^3, 10^4, 3{\cdot}10^4, 10^5, 3{\cdot}10^5\}$ we run $R{=}200$ replicates of the sampling algorithm and, for each replicate, compute the gap $p^* - p(\hat\ell)$, where $p(\hat\ell)$ is the true count of configurations falling in the empirically-selected window divided by $N$.

\begin{figure}[t]
\centering
\includegraphics[width=\linewidth]{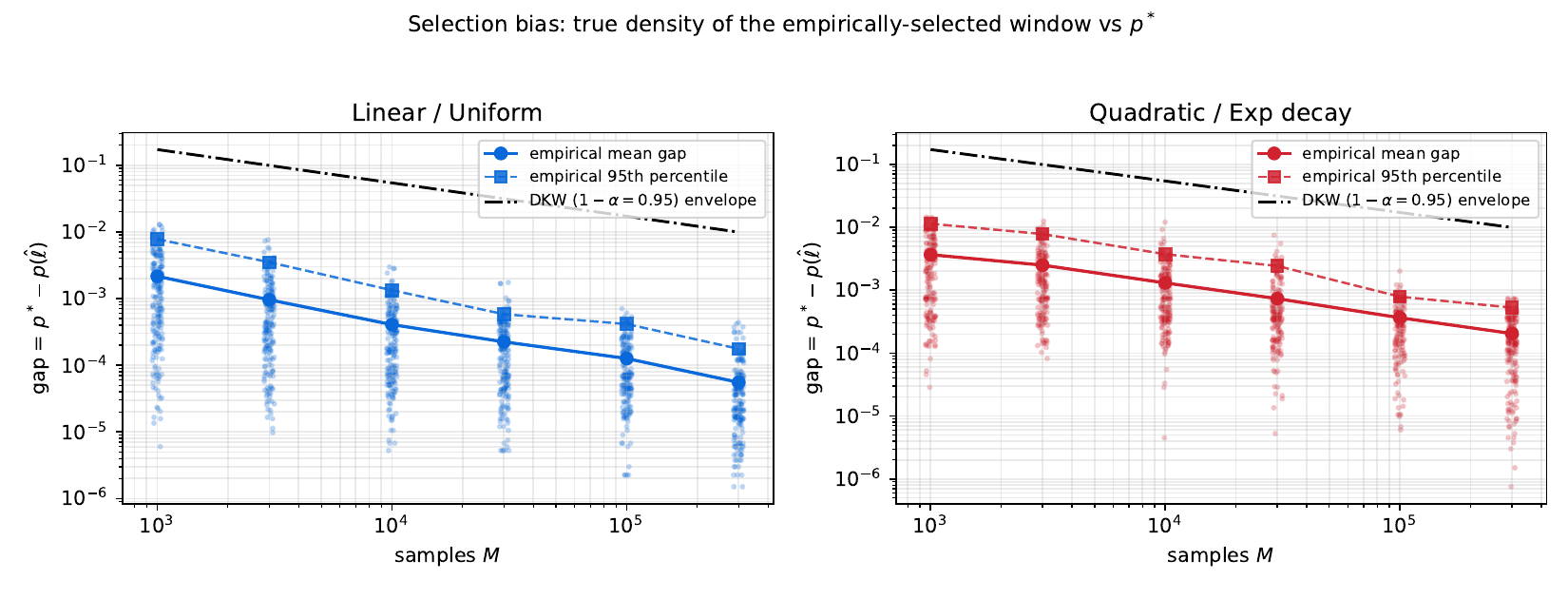}
\caption{Empirical selection-bias gap $p^* - p(\hat\ell)$ as a function of sample count $M$ on \textsc{Medium} (200 replicates per $M$).
Dots: individual replicates (jittered).
Solid line: empirical mean.
Dashed line: empirical $95$th percentile.
Dot-dashed: the DKW envelope from \autoref{cor:selection} at $\alpha{=}0.05$.
The empirical gaps shrink at the predicted $M^{-1/2}$ rate; no replicate violates the envelope, which is conservative by $1$--$2$ orders of magnitude.}
\label{fig:selbias}
\end{figure}

\autoref{fig:selbias} supports three conclusions.
First, the DKW envelope is never violated by any of the $200$ replicates at any $M$, so the worst-case bound holds in practice.
Second, the empirical mean and 95th-percentile gaps both decay at approximately $M^{-1/2}$ (slope $-0.5$ on log--log axes), matching the theoretical scaling.
Third, the constant in the DKW bound is conservative by roughly one to two orders of magnitude: at $M{=}10^5$ the empirical 95th percentile is $\sim 4\times 10^{-4}$ versus a DKW envelope of $\sim 1.7\times 10^{-2}$, so practitioners running with even modest $M$ obtain near-optimal pool sizes far more reliably than the worst-case bound implies.

\section{Algorithm details}\label{sec:algorithms-app}

This appendix expands each algorithm summarized in \autoref{tab:complexity} with pseudocode, complexity derivation, and---where relevant---error analysis. All four algorithms share the same final step: a two-pointer sliding window over a sorted array of utility values to find the densest $\varepsilon$-width interval.

\subsection{Full enumeration with iterative outer sums}
\label{sec:app-enum}

Materializing all $N = \prod_i n_i$ configuration tuples costs $O(kN)$ space and $O(kN)$ evaluation time, a factor-$k$ overhead over handling the utility values alone. We instead keep only the running sumset of utilities. Starting from $V_1 = \{w_1 \cdot u(s) : s \in S_1\}$, we form
\[
    V_{1:i+1} = \{a + b : a \in V_{1:i},\, b \in V_{i+1}\}
    \qquad \text{for } i = 1, \ldots, k-1,
\]
so after $k-1$ iterations $V_{1:k}$ is the full multiset $\{U(c) : c \in S_1 \times \cdots \times S_k\}$. No configuration tuple is materialized, and storage never exceeds $O(N)$ scalars. Sorting $V_{1:k}$ and applying a two-pointer sliding window of width $\varepsilon$ finds the densest interval in a single linear pass, so sorting dominates: $O(N \log N)$ time, $O(N)$ space.

\begin{tcolorbox}[
  enhanced, breakable,
  colback=bgsurface,
  colframe=bordercolor,
  title={\texttt{\textcolor{textmuted}{Algorithm 1: Full Enumeration + Sliding Window}}},
  fonttitle=\small,
  colbacktitle=bgsurface2,
  coltitle=textmuted,
  arc=2pt, top=0pt, bottom=0pt
]
\begin{lstlisting}[style=darkcode, language=pseudocode, mathescape=true]
Input:  Sets $S_1$, ..., $S_k$; weights $w_1$, ..., $w_k$;
        tolerance $\varepsilon$
Output: Size $|C|$ and interval bounds $[l, l + \varepsilon]$

// Step 1: Compute per-set weighted utilities
for $i = 1$ to $k$:
    $V_i$ <- {$w_i \cdot u(s)$ : $s \in S_i$}

// Step 2: Initialize accumulator
$A$ <- $V_1$

// Step 3: Iteratively compute outer sums
for $i = 2$ to $k$:
    $A$ <- {$a + b$ : $a \in A$, $b \in V_i$}

// Step 4: Sort by utility value
Sort $A$ in non-decreasing order

// Step 5: Apply sliding window
Apply two-pointer window to find densest
$\varepsilon$-width interval

// Step 6: Return result
Return size $|C|$ and bounds $[l, l + \varepsilon]$
\end{lstlisting}
\end{tcolorbox}

\subsection{Meet-in-the-middle}
\label{sec:app-mitm}

When $N$ exceeds direct enumeration, we partition the $k$ sets into two halves $A_{\text{sets}} = (S_1, \ldots, S_{\lfloor k/2 \rfloor})$ and $B_{\text{sets}} = (S_{\lfloor k/2 \rfloor + 1}, \ldots, S_k)$. Enumeration proceeds independently on each half using the outer-sum trick of \S\ref{sec:app-enum}, producing sumsets $V_A$ and $V_B$ of sizes $|V_A|, |V_B| \approx \sqrt{N}$ when the components are balanced.

The full utility multiset $\{U(c) : c \in S_1 \times \cdots \times S_k\}$ equals the pairwise sum $\{a + b : a \in V_A,\, b \in V_B\}$: each configuration $c$ maps to a unique $(a, b)$ pair with $U(c) = a + b$, and each pair contributes exactly one configuration. Forming these sums, sorting the result, and applying the sliding window returns the exact optimum. The memory advantage is that neither half's enumeration ever exceeds $O(\sqrt{N})$; the final $O(N)$ arises only in the merge. If even the merge does not fit in memory, the pairwise sums can be generated in sorted order from the two sorted halves with a heap of size $O(\sqrt{N})$, so the sliding window can be streamed while holding only the values inside the current $\varepsilon$-width window. This is the classical technique of \citet{horowitz1974computing}, which halves the exponent of the enumeration cost, adapted here to the range-density query.

\begin{tcolorbox}[
  enhanced, breakable,
  colback=bgsurface,
  colframe=bordercolor,
  title={\texttt{\textcolor{textmuted}{Algorithm 2: Meet in the Middle}}},
  fonttitle=\small,
  colbacktitle=bgsurface2,
  coltitle=textmuted,
  arc=2pt, top=0pt, bottom=0pt
]
\begin{lstlisting}[style=darkcode, language=pseudocode, mathescape=true]
Input:  Sets $S_1$, ..., $S_k$; weights $w_1$, ..., $w_k$;
        tolerance $\varepsilon$
Output: Size $|C|$ and interval bounds $[l, l + \varepsilon]$

// Step 1: Partition into two halves
$m$ <- $\lfloor k/2 \rfloor$

// Step 2: Enumerate each half via outer sums
$V_A$ <- outer-sum enumeration over $(S_1, ..., S_m)$
$V_B$ <- outer-sum enumeration over $(S_{m+1}, ..., S_k)$

// Step 3: Merge halves via pairwise sums
$A$ <- {$a + b$ : $a \in V_A$, $b \in V_B$}

// Step 4: Sort merged multiset
Sort $A$ in non-decreasing order

// Step 5: Apply sliding window of width $\varepsilon$
Find densest $\varepsilon$-interval $[l, l + \varepsilon]$

// Step 6: Return result
Return size $|C|$ and bounds $[l, l + \varepsilon]$
\end{lstlisting}
\end{tcolorbox}

\subsection{FFT convolution: correspondence and discretization}
\label{sec:app-fft}

\paragraph{Convolution correspondence.} Discretize per-component utilities into $R$ bins of width $\delta$. Let $h_i(v) = |\{s \in S_i : \text{bin}(w_i \cdot u(s)) = v\}|$ be the histogram of set $S_i$ after discretization. The number of configurations whose discretized total utility falls in bin $z$ is
\[
    \#\{c : \text{bin}(U(c)) = z\}
    = \sum_{v_1 + \cdots + v_k = z} \prod_{i=1}^{k} h_i(v_i)
    = (h_1 * h_2 * \cdots * h_k)(z),
\]
where $*$ denotes discrete convolution. The product form arises because each set contributes independently: fixing the per-component bins $(v_1, \ldots, v_k)$ leaves $\prod_i h_i(v_i)$ independent configuration choices \citep{cormen2009introduction}. Computing $k - 1$ pairwise convolutions via FFT costs $O(k R \log R)$ \citep{cooley1965algorithm}. A prefix-sum sliding window over the resulting histogram then identifies the densest $\varepsilon$-interval in $O(R)$ additional work.

\paragraph{Discretization error.} Rounding each per-component contribution to the nearest bin displaces it by at most $\delta/2$, so the total utility of any configuration is shifted by at most $k\delta/2$. This shift can move a configuration across the boundary of the optimal $\varepsilon$-window only if its true utility lies within $k\delta/2$ of the boundary, and the mass within $k\delta/2$ of the two boundaries is an $O(k\delta/\varepsilon)$ fraction of the window's own mass whenever the utility distribution has bounded density near the boundaries. Choosing $\delta = O(\varepsilon/k)$ therefore guarantees a fixed relative accuracy at the cost of $R = O((u_{\max} - u_{\min}) \cdot k / \varepsilon)$ bins. In practice $\delta = 0.05$ keeps relative count error under $5\%$ for all datasets in \autoref{tab:datasets}.

\begin{tcolorbox}[
  enhanced, breakable,
  colback=bgsurface,
  colframe=bordercolor,
  title={\texttt{\textcolor{textmuted}{Algorithm 3: FFT Convolution + Window Search}}},
  fonttitle=\small,
  colbacktitle=bgsurface2,
  coltitle=textmuted,
  arc=2pt, top=0pt, bottom=0pt
]
\begin{lstlisting}[style=darkcode, language=pseudocode, mathescape=true]
Input:  Sets $S_1$, ..., $S_k$; weights $w_1$, ..., $w_k$;
        tolerance $\varepsilon$; resolution $\delta$
Output: Count estimate and interval bounds $[l, l + \varepsilon]$

// Step 1: Per-set utilities, range, and bin count
for $i = 1$ to $k$:
    $V_i$ <- {$w_i \cdot u(s)$ : $s \in S_i$}
$u_{\min}$ <- $\sum_i \min V_i$;  $u_{\max}$ <- $\sum_i \max V_i$
$R$ <- ceil$((u_{\max} - u_{\min}) / \delta)$ + 1

// Step 2: Build per-set utility histograms
for $i = 1$ to $k$:
    $h_i$ <- histogram of $V_i$ on $R$ bins

// Step 3: Initialize convolution
$H$ <- $h_1$

// Step 4: Iteratively convolve histograms
for $i = 2$ to $k$:
    $H$ <- FFTConvolve($H$, $h_i$)  // zero-padded FFT

// Step 5: Apply sliding window to result
Compute prefix sums of $H$
Sweep window of width ceil$(\varepsilon / \delta)$

// Step 6: Return estimate
Return count estimate and bounds $[l, l + \varepsilon]$
\end{lstlisting}
\end{tcolorbox}

\subsection{Monte Carlo sampling procedure}
\label{sec:app-sampling}

When $R$ is prohibitively large (e.g., $k \geq 15$ with wide per-component ranges), even the FFT becomes impractical. The procedure is: draw $M$ configurations i.i.d.\ uniformly from $S_1 \times \cdots \times S_k$ (equivalently, sample one $s_i$ from each $S_i$ independently), compute $U(c)$ for each sample, sort the utilities, and apply the sliding window to identify the empirically densest interval $[\hat\ell, \hat\ell + \varepsilon]$. Scaling the sample count in $[\hat\ell, \hat\ell + \varepsilon]$ by $N/M$ estimates the true count.

Random configuration sampling in additive-utility spaces has precedent in software product lines \citep{oh2017near}. Because $U$ is a sum of $k$ independent per-component contributions, its standard deviation grows as $\sqrt{k}$ while its range grows linearly in $k$, so the distribution concentrates around its mean as $k$ grows and samples land densely in the modal region where the densest window typically lies. The uniform-density and selection-bias guarantees for this algorithm are Prop.~\ref{prop:sampling} and Cor.~\ref{cor:selection} in \S\ref{sec:approach}; proofs are in \S\ref{sec:proofs}.

\begin{tcolorbox}[
  enhanced, breakable,
  colback=bgsurface,
  colframe=bordercolor,
  title={\texttt{\textcolor{textmuted}{Algorithm 4: Monte Carlo Sampling}}},
  fonttitle=\small,
  colbacktitle=bgsurface2,
  coltitle=textmuted,
  arc=2pt, top=0pt, bottom=0pt
]
\begin{lstlisting}[style=darkcode, language=pseudocode, mathescape=true]
Input:  Sets $S_1$, ..., $S_k$; weights $w_1$, ..., $w_k$;
        tolerance $\varepsilon$; sample count $M$
Output: Estimated count and interval bounds $[l, l + \varepsilon]$

// Step 1: Draw M i.i.d. uniform samples
for $j = 1$ to $M$:
    for $i = 1$ to $k$:
        sample $s_i$ ~ Uniform($S_i$)
    $u^{(j)}$ <- $\sum_{i=1}^{k} w_i \cdot u(s_i)$

// Step 2: Sort sample utilities
Sort $(u^{(1)}, ..., u^{(M)})$ non-decreasing

// Step 3: Sliding window over sorted samples
Find $\hat l$ maximizing sample count
in $[\hat l, \hat l + \varepsilon]$

// Step 4: Scale to estimate true count
$\hat{|C|}$ <- (count in window) $\times$ $N / M$

// Step 5: Return estimate
Return $\hat{|C|}$ and bounds $[\hat l, \hat l + \varepsilon]$
\end{lstlisting}
\end{tcolorbox}

\end{document}